\newif\ifreport

%\reporttrue
\reportfalse

\ifreport
\documentclass[journal]{IEEEtran}
\else
\documentclass[conference]{IEEEtran}

\fi

\usepackage{multirow}
\usepackage{booktabs}
\usepackage{threeparttable}
\usepackage{hyperref}
\usepackage{cite}
\usepackage{graphicx}
\usepackage{amsmath}
\usepackage{amssymb}
\usepackage{amsthm}
\usepackage{color}
\usepackage[noblocks]{authblk}
\theoremstyle{definition}

\newtheorem{theorem}{Theorem}

\newtheorem{corollary}{Corollary}

\def\blue{\color{black}}

\hyphenation{op-tical net-works semi-conduc-tor INFOCOM}
\IEEEoverridecommandlockouts

\begin{document}

\title{Analog Beam Tracking in Linear Antenna Arrays: Convergence, Optimality, and Performance}
% \title{Super Fast Beam Tracking in Phased \\ Antenna Arrays}

\author{Jiahui Li$^*$, Yin Sun$^\S$, Limin Xiao$^{\P\ddagger}$, Shidong Zhou$^*$, C. Emre Koksal$^\dagger$ \\
$^*$Dept. of EE, $^\P$Research Institute of Information Technology, Tsinghua University, Beijing, 100084, China\\
$^\S$Dept. of ECE, Auburn University, Auburn AL, 36849, U.S.A.\\
$^\dagger$Dept. of ECE, The Ohio State University, Columbus OH, 43210, U.S.A.
\thanks{$^\ddagger$Corresponding author.}
\thanks{J. Li, L. Xiao, and S. Zhou have been supported in part by National Basic Research Program of China (973 Program) grant 2012CB316002, National S\&T Major Project grant 2017ZX03001011-002, National Natural Science Foundation of China grant 61631013, National High Technology Research and Development Program of China (863 Program) grant 2014AA01A703, Science Fund for Creative Research Groups of NSFC grant 61321061, Tsinghua University Initiative Scientific Research grant 2016ZH02-3, International Science and Technology Cooperation Program grant 2014DFT10320, "2011 Plan" Wireless Communication Technology Co-Innovation Center grant 20161210020, Tsinghua-Qualcomm Joint Research Program, and Huawei HIRP project. Y. Sun has been supported in part by ONR grant N00014-17-1-2417.}}

%\author[*]{Jiahui Li\thanks{This work was supported in part by ONR grant N00014-17-1-2417, National Basic Research Program of China (973 Program) grant 2012CB316002, National S\&T Major Project grant 2017ZX03001011-002, National Natural Science Foundation of China grant 61631013, National High Technology Research and Development Program of China (863 Program) grant 2014AA01A703, Science Fund for Creative Research Groups of NSFC grant 61321061, Tsinghua University Initiative Scientific Research grant 2016ZH02-3, International Science and Technology Cooperation Program grant 2014DFT10320, Tsinghua-Qualcomm Joint Research Program, and Huawei HIRP project.}}
%\author[$\dagger$]{Yin Sun}
%\author[*$\ddagger$]{Limin Xiao\thanks{$^\ddagger$Corresponding author.}}
%\author[*]{Shidong Zhou}
%\author[$\dagger$]{C. Emre Koksal}
%\affil[*]{Department of Electronic Engineering, Research Institute of Information Technology, \authorcr Tsinghua University, Beijing 100084, China}
%\affil[$\dagger$]{Dept. of ECE, the Ohio State University, Columbus OH, 43210, U.S.A.}

\maketitle

\begin{abstract}
%We study analog beam tracking in linear antenna array systems, which is formulated as a non-convex joint optimization problem in static beam tracking scenarios. An efficient recursive beam tracking algorithm is proposed and we prove that it is asymptotically optimal. Numerical results show that the proposed algorithm achieves high tracking accuracy. Further, in dynamic beam tracking scenarios, we develop a tradeoff curve between tracking performance and angular velocity, which reveals that the proposed algorithm has great potential to achieve extremely low pilot overhead and is quite promising to be applied under multi-beam multi-user conditions.

% Fast and accurate analog beam tracking is an important and yet challenging issue in 5G wireless networks, due to the inherent non-convexity of the problem. In this paper, we develop a low-complexity recursive beam tracking algorithm. In static beam tracking scenarios, this algorithm converges to the minimum Cram\'er-Rao lower bound (CRLB) with high probability. In dynamic beam tracking scenarios, if combined with a simple TDMA pilot pattern, this algorithm has the potential to track hundreds of independent beams, generated by highly-mobile transmitters/reflectors, with low pilot overhead. Simulations are provided to illustrate the performance gain of this algorithm.

The directionality of millimeter-wave (mmWave) communications creates a significant challenge in serving fast-moving mobile terminals on, e.g., high-speed vehicles, trains, and UAVs. This challenge is exacerbated in mmWave systems using analog antenna arrays, because of the inherent non-convexity in the control of the phase shifters. In this paper, we develop a recursive beam tracking algorithm which can simultaneously achieve fast tracking speed, high tracking accuracy, low complexity, and low pilot overhead. In static scenarios, this algorithm converges to the minimum Cram\'er-Rao lower bound (CRLB) of beam tracking with high probability. In dynamic scenarios, even at SNRs as low as 0dB, our algorithm is capable of tracking a mobile moving randomly at an absolute angular velocity of 10-20 degrees per second, using only 5 pilot symbols per second. If combining with a simple TDMA pilot pattern, this algorithm can track hundreds of high-speed mobiles in 5G configurations. Our simulations show that the tracking performance of this algorithm is much better than several state-of-the-art algorithms.

 % in analog beamforming and beam direction estimation. 
 
%The success of this algorithm relies on a good beam direction initialization method and an optimal choice of update step-size. 
\end{abstract}

\bstctlcite{BSTcontrol}

\section{Introduction}\label{sec_intro}

%Multi-antenna technique provides high data rate and reliability for wireless communication systems, which is promising to meet the explosive growth of data traffic \ifreport\cite{Marzetta2010Noncooperative, Rusek2013Scaling, Larsson2014massive}\else\cite{Rusek2013Scaling}\fi.  However, the use of multiple antennas leads to separate radio frequency (RF) chains, which introduce higher power consumption and hardware cost. 
{
The explosively growing data traffic in future wireless systems can be leveraged by using large antenna arrays and millimeter-wave (mmWave) frequency band \ifreport\cite{Marzetta2010Noncooperative, Rusek2013Scaling, Larsson2014massive, Pi2011An, Boccardi2014Five, Heath2016overview}\else\cite{Larsson2014massive, Heath2016overview}\fi. However, as the array size grows and the carrier frequency increases, the large number of A/D (or D/A) converters in the fully digital array make the design infeasible due to high energy consumption and huge hardware cost \cite{Heath2016overview}. A promising alternative is analog beamforming \ifreport\cite{Ohira2000Electronically, Sun2014Mimo, Han2015Large, Puglielli2016Design, Molisch2016Hybrid, Heath2016overview}\else\cite{Han2015Large, Puglielli2016Design, Heath2016overview}\fi, in which the signals of all antennas are beamformed in the analog domain by using phase shifters, and a single A/D (or D/A) is used for digital processing. This analog beamforming solution has been standardized  by IEEE 802.11ad \cite{IEEE80211ad} and IEEE 802.15.3c \cite{IEEE802153c}, and is actively discussed by several 5G industrial organizations \cite{METIS2015, ITU2015}.
}
% and advocated by several 5G industrial companies and organizations \cite{
%Roh2014mmwave, Han2015Large, 
%METIS2015, ITU2015}.
%\ifreport\else\footnote{Due to limitation of space, we only present part of the references. Please check the entire list in our technical report \cite{Li2017analog}.}\fi

{

One fundamental challenge in analog beamforming is how to track the beam directions using limited pilot resources. This challenge is especially difficult when a large number of narrow beams generated from many fast-moving mobile terminals or reflectors need to be tracked. This challenge has been recognized in the industry as one important research task for 5G massive MIMO and mmWave systems, e.g., \cite{Brown2016Promise}.
% \cite{Keysight2015massive, Samsung20155g, Amitava2016Enabling, Wen2016Bringing, Brown2016Promise}. 

There has been a number of recent studies on beam direction estimation/tracking for analog beamforming \ifreport\cite{Hur2013Millimeter, Alkhateeb2015Limited, zhu2016auxiliary, Alkhateeb2014Channel, Xiao2016Enabling, Lee2014Exploiting, Gao2015multi, Alkhateeb2015Compressed, zhang2016mobile, palacios2016tracking, garcia2017optimal, bae2017new}\else\cite{Hur2013Millimeter, Alkhateeb2014Channel, Alkhateeb2015Compressed, zhang2016mobile, palacios2016tracking, garcia2017optimal, bae2017new}\fi. In \ifreport\cite{Hur2013Millimeter, Alkhateeb2015Limited, zhu2016auxiliary, Alkhateeb2014Channel, Xiao2016Enabling, Lee2014Exploiting, Gao2015multi, Alkhateeb2015Compressed, zhang2016mobile, palacios2016tracking}\else\cite{Hur2013Millimeter, Alkhateeb2014Channel, Alkhateeb2015Compressed, zhang2016mobile, palacios2016tracking}\fi, one round of beam sweeping, scanning many spatial beam directions in a codebook, is needed for updating the beam direction estimate. In \cite{garcia2017optimal, bae2017new}, the training is performed based on knowledge of prior beam estimates, which is beneficial to save pilots. However, the optimal training has not been obtained and the estimation/tracking is not optimized accordingly, which leads to poor tracking accuracy. 

%We will first consider a linear antenna array to track a single beam direction, a setting which helps us present the the insights and basic theorems avoiding cumbersome technical details as much as possible. Tracking multiple beams using analog beamforming can be achieved by time-division multiplexing across different beams.

The goal of this paper is to develop an efficient beam tracking algorithm that can track a large number of high-speed mobiles with high accuracy and low pilot overhead.}
%However, it also introduces some key challenges: First, if the number of RF chains is smaller than the number of antennas, the receiver cannot directly estimate channel gains on all the antennas separately and obtain the channel matrix \cite{Alkhateeb2014Channel, Gao2015multi, Payami2015Effective, Alkhateeb2015Limited}. Fewer sampled signals will be obtained at each time-slot on certain angular bins (directions), so less information can be utilized for channel estimation with limited overhead. Second, the received signals are coupled with the selected beamforming weights, which means the received signals depends on the beamforming weights, while good beamforming weights are determined based on these received signals. Thus, developing realistic algorithms that can efficiently and precisely focus the beams on the right channel directions is very critical not only to achieve a high performance, but also for the success of channel estimation.
The main contributions of this paper are summarized as follows:
\begin{itemize}
\item %[$\bullet$] 
%We formulate the analog beam tracking problem as a sequential control problem. Due to its non-convexity, finding the optimal solution is not possible across all points in time. Instead, 
We develop a recursive beam tracking algorithm. In \emph{static} beam tracking scenarios, its convergence and asymptotic optimality are established in three steps:
% this algorithm  is proven to converge to the  Cram\'er-Rao lower bound (CRLB) with a high probability:
First, we prove that it converges to a set of stable beam directions with probability one (Theorem \ref{th_convergence}). Second, we prove that under certain conditions, it converges to the real beam direction, instead of other sub-optimal stable directions, with high probability (Theorem \ref{th_lock}). Finally, if the step-sizes are chosen appropriately, then with high probability, the mean square error (MSE) of the proposed algorithm converges to the minimum\footnote{The CRLB is a function of the beamforming control action. The minimum CRLB is obtained by optimizing among all control actions (see Section \ref{sec_problem}).} CRLB (Theorem \ref{th_normal}).

\item Our simulation results in both \emph{static} and \emph{dynamic} beam tracking scenarios suggest that the proposed algorithm can achieve much lower beam tracking error and higher data rate than several state-of-the-art algorithms, with the same pilot overhead. Particularly, if 5 uniformly inserted pilot symbols per second are used and the receive SNR of each antenna is 10 dB (or 0 dB), by combining with a TDMA round-robin pilot pattern, the proposed algorithm can track 1000 narrow beams each rotating at an angular velocity of \textbf{18.33$^\circ$/s} (or \textbf{13.18$^\circ$/s}), which is \textbf{72 miles/h} (or \textbf{52 miles/h}) if the transmitters/reflectors steering these beams are at a distance of 100 meters.  And when it is needed to track extremely fast mobiles, one can insert more pilot symbols for each mobile. Hence, the tracking speed can be \textbf{very fast}.

\end{itemize}

Two major technical reasons why our algorithm achieves a good tracking performance are: (i) the probing beamforming direction in each time-slot is close to the real beam direction, while the other algorithms (e.g., \cite{Hur2013Millimeter, Alkhateeb2014Channel, Alkhateeb2015Compressed, zhang2016mobile, palacios2016tracking}) probe a lot of beam directions, and (ii) an optimal step-size is chosen to ensure a fast convergence rate to the global optimal beam direction, instead of other local optimal beam directions. To the extent of our knowledge, this paper presents the first theoretical analysis on the convergence and asymptotic optimality of analog beam tracking in antenna array systems.

The remaining of this paper is organized as follows: In Sections \ref{sec_model}, the system model is introduced. In Sections \ref{sec_problem} and \ref{sec_algorithm}, we formulate the beam tracking problem and develop a recursive beam tracking algorithm that is proved to be asymptotically optimal with high probability in \emph{static} beam tracking scenarios. In Section \ref{sec_simulation}, we evaluate its performance in both \emph{static} and \emph{dynamic} beam tracking scenarios. 

% \subsection{Related Work}

\ifreport
\section{Notations and Model}\label{sec_model}
\begin{figure}
\centering
\includegraphics[width=7.5cm]{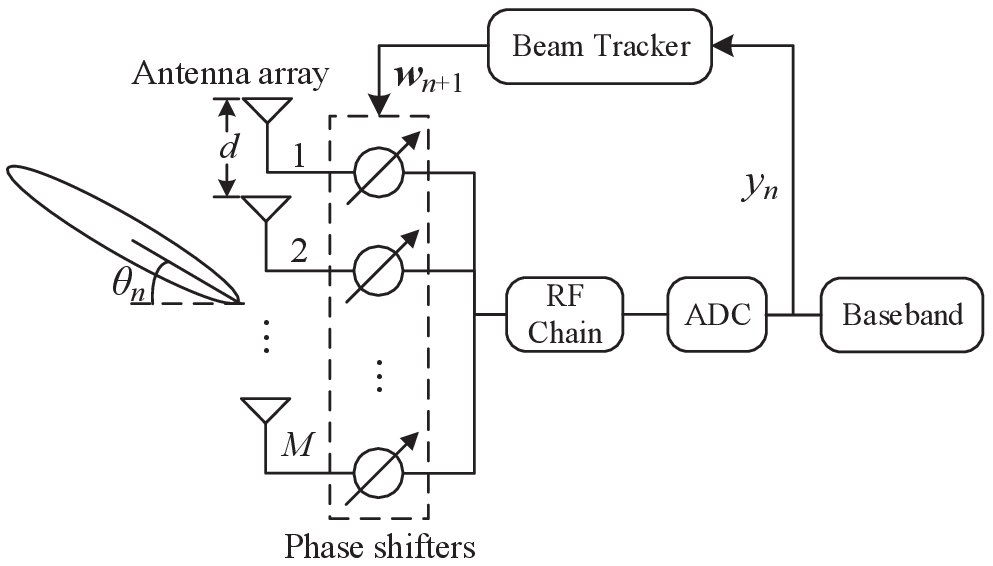}
\caption{System model.}
\label{fig_system}
\end{figure}

\subsection{Notations} Lower case letters such as $a$ and $\textbf{a}$ will be used to represent scalars and column vectors, respectively, where $|a|$ denotes the modulus of $a$ and $\|\textbf{{a}}\|_2$ denotes the 2-norm of $\textbf{a}$.  Upper case letters such as $\mathbf{A}$ will be utilized to denote matrices. For a vector \textbf{{a}} or a matrix $\mathbf{A}$, its transpose is denoted by $\textbf{{a}}^\text{T}$ or $\mathbf{A}^\text{T}$, and its Hermitian transpose is denoted by $\textbf{{a}}^\text{H}$ or $\mathbf{A}^\text{H}$. Let $\mathcal{CN}(u,\sigma^2)$ stand for the circular symmetric complex Gaussian distribution with mean $u$ and variance $\sigma^2$, and $\mathcal{N}(u,\sigma^2)$ stand for the real Gaussian distribution with mean $u$ and variance $\sigma^2$. The sets of integers and real numbers are written as $\mathbb{Z}$ and $\mathbb{R}$, respectively. Expectation is denoted by $\mathbb{E}[\cdot]$ and the imaginary part of a variable $x$ is denoted by $\operatorname{Im}\left\{ x \right\}$. The natural logarithm of $x$ is denoted by $\log(x)$. The phase of a complex number $z$ is obtained by $\angle z$. 

\else
\section{System Model}\label{sec_model}

\fi
\subsection{System Model}
Consider the linear antenna array receiver in Fig. \ref{fig_system}, where $M$ antennas are placed along a line, with a distance $d$ between neighboring antennas. The antennas are connected by phase shifters to a single radio frequency (RF) chain, and the phase shifters are controlled to steer the observation direction. 
%\ifreport\cite{Ayach2014Spatially, Alkhateeb2014Channel, Sun2014Mimo, Alkhateeb2015Limited, Heath2016overview}\else\cite{Alkhateeb2014Channel}\fi. 
In time-slot $n$,  a pilot signal arrives at the antenna array from an angle-of-arrival (AoA) $\theta_n\in[-\pi/2,\pi/2]$. Hence, the steering vector of this arriving beam is
%the antenna array receives {\red with  
%Let
\vspace{-1mm}
\begin{equation}\label{eq_steer}
	\textbf{a}(x_n) = \left[ 1, e^{j \frac{2\pi d}{\lambda} x_n}, \cdots, e^{j \frac{2\pi d}{\lambda}(M-1)x_n} \right]^\text{H},
\vspace{-1mm}
\end{equation}
%be the , 
where $x_n = \sin(\theta_n)$ is the sine of the AoA $\theta_n$ and $\lambda$ is the wavelength. The channel response is $\textbf{h}(x_n) = \beta \textbf{a}(x_n)$, where $\beta$ is the complex channel coefficient. 

\begin{figure}
\centering
%\vspace{-3.75mm}
\includegraphics[width=7cm]{System_model.eps}
\vspace{-2.75mm}
\caption{System model.}
\vspace{-4.5mm}
\label{fig_system}
\end{figure}

Let $w_{mn} \in[-\pi,\pi]$ be the phase shift in radians provided by the $m$-th phase shifter in time-slot $n$. Then, the analog beamforming vector steered by the phase shifters is
\vspace{-1mm}
\begin{equation}\label{eq_bf}
	\textbf{w}_n = \frac{1}{\sqrt{M}}\left[ e^{jw_{1n}}, e^{jw_{2n}}, \cdots, e^{jw_{Mn}} \right]^\text{H}.
\vspace{-1mm}
\end{equation}
Combining the output signals of the phase shifters and dividing the summed signal by $\beta$ yields 
\vspace{-1mm}
\begin{equation}\label{eq_observation}
	y_n = \textbf{w}_n^\text{H}\textbf{a}(x_n) + \frac{{z}_n}{\sqrt{\rho}},
\vspace{-1mm}
\end{equation}
where $\rho = {|\beta|^2}/{\sigma^2}$ is the SNR at each antenna, $\sigma^2$ is the noise power, and the ${z}_n$'s are \emph{i.i.d.} circularly symmetric complex Gaussian random variables with zero mean and unity variance. Given $x_n$ and $\textbf{w}_n$, the conditional probability density function of $y_n$ is
\vspace{-1mm}
\begin{equation}\label{eq_pdf}
	p(y_n| x_n, \textbf{w}_n) = \frac{\rho}{\pi} e^{- \rho\left| y_n -  \textbf{w}_n^\text{H}\textbf{a}(x_n) \right|^2}.
\vspace{-1mm}
\end{equation}

A beam tracker determines the analog beamforming vector $\textbf{w}_{n}$ and provides an estimate $\hat{x}_n$ of the sine $x_n$ of the AoA.\footnote{Interestingly, by tracking the sine $x_n$, we obtain a beam tracking algorithm with lower complexity and higher robustness than tracking the AoA $\theta_n$ \cite{Li2017analog}.} From a control system perspective, $x_n$ is the system state, $\hat{x}_n$ is the estimate of the system state, the beamforming vector $\textbf{w}_{n}$ is the control action, and $y_n$ is the observation. Let $\psi = (\textbf{w}_1, \textbf{w}_2, \ldots, \hat{x}_1, \hat{x}_2, \ldots)$ represent a beam tracking policy. In particular, we consider the set $\Psi$ of \emph{causal} beam tracking policies: At the end of time-slot $n$, the estimate $\hat{x}_n$ of time-slot $n$ and the control action $\textbf{w}_{n+1}$ of time-slot $n+1$ are determined  by using the history of the control actions $(\textbf{w}_1, \ldots, \textbf{w}_n)$ and the observations $(y_1, \ldots, y_n)$.

%satisfy the following conditions: (i)~The information that is available for determining the beamforming vector $\textbf{w}_n$ includes the history of received signals $(y_i: i = 1, \ldots, n-1)$ and the history of beamforming vectors $(\textbf{w}_i: i = 1, \ldots, n-1)$. (ii)~The beamforming vectors $\textbf{w}_1, \textbf{w}_2, \ldots$ satisfy (\ref{eq_bf}).

%The goal of this paper is to develop an efficient beam tracking algorithm that can achieve high accuracy with a low pilot overhead.

\section{Problem Formulation and Performance Bound}\label{sec_problem}
%\begin{figure}[!t]
%\centering
%\includegraphics[width=4.75cm]{Mainlobe_conf.eps}
%\vspace{-2mm}
%\caption{$|\textbf{a}(\hat{x}_n)^\text{H}\textbf{a}(x)|/\sqrt{M}$ vs. $\hat{x}_n$ with $M\! = \!8$, $d\! = \!0.5\lambda$, and $x\! =\! 0.5$.}
%\label{fig_mainlobe}
%\vspace{-4mm}
%\end{figure}
Given any time-slot $n$, the beam tracking problem can be formulated as 
\begin{align}\label{eq_problem}
	%\text{MSE}_{\text{opt},n}\overset{\Delta}{=}
	\underset{\psi \in \Psi}{\min}~& \mathbb{E}\left[ \left( \hat{x}_{n} - x_n \right)^2 \right] \\ 
	\text{s.t.}~&  \mathbb{E}\left[ \hat{x}_{n} \right] = x_n, \label{eq_constrant}\\
	& \eqref{eq_steer},~\eqref{eq_bf},~\eqref{eq_observation}, \nonumber
\end{align}
where the constraint \eqref{eq_constrant} ensures that $\hat{x}_{n}$ is an \emph{un-biased} estimate of $x_n$. Problem \eqref{eq_problem} is a constrained sequential control and estimation problem that is difficult, if not impossible, to solve optimally. First, the system is partially observed through the observation $y_n$. Second, both the control action $\textbf{w}_{n}$ and the estimate $\hat{x}_n$ need to be optimized in Problem \eqref{eq_problem}: On the one hand, because only the phase shifts $(w_{1n}, \ldots, w_{Mn})$ in \eqref{eq_bf} are controllable, the optimization of $\textbf{w}_{n}$ is a non-convex optimization problem. On the other hand, as shown in Fig. \ref{fig_imagpart} and  \eqref{eq_stable_points} below, the optimization of the estimate $\hat{x}_n$ is also non-convex and there are multiple local optimal estimates.

%{\red which introduces multiple stable equilibrium points to the tracking .} 

%\begin{align}\label{eq_problem}
%	& \underset{\begin{matrix}F_n(\cdot), W_{mn}(\cdot)  \\ m = 1,\ldots, M \end{matrix}}{\min} ~\mathbb{E}\left[ \left( \hat{x}_{n} - x \right)^2 \right] \\
%	&~~~~~~~\text{s.t.}\ \ \hat{x}_{n} = F_n(y_1, \ldots, y_n, \textbf{w}_1, \ldots, \textbf{w}_n), \mathbb{E}\left[ \hat{x}_{n} \right] = x, \nonumber \\
%	&~~~~~~~~~~~~\textbf{w}_n = \frac{1}{\sqrt{M}}\left[ e^{jw_{1n}}, e^{jw_{2n}}, \ldots, e^{jw_{Mn}} \right]^\text{H}, \nonumber \\
%&~~~~~~~~~~~~w_{mn} = W_{mn}(y_1, \ldots, y_{n-1}, \textbf{w}_1, \ldots, \textbf{w}_{n-1}), \nonumber \\
%&~~~~~~~~~~~~m = 1,\ldots, M. \nonumber
%\end{align}
% which is different from the traditional MMSE problem in full-digital MIMO systems \cite{Biguesh2006Training}, since only one combined signal $y_n$ is obtained in each time-slot, which is dependent on the analog beamforming vector $\textbf{w}_n$. Therefore, it is a joint optimization problem of both the analog beamforming vector $\textbf{w}_n$ and the estimate $\hat{x}_{n}$. Moreover, due to the constant-modulus constraint in (\ref{eq_constraint}), both the optimization of the analog beamforming vector $\textbf{w}_n$ and the optimization of the estimate $\hat{x}_{n}$ are non-convex \cite{Rial2016Hybrid,Heath2016overview, zhang2016mobile,palacios2016tracking}.

%We derive the optimum MSE that can be achieved for this problem. 

Next, we consider \emph{static} beam tracking scenarios, where $x_n=x$ for all time-slot $n$, and establish a lower bound of the MSE in \eqref{eq_problem}: Given the control actions $(\textbf{w}_1,\ldots,\textbf{w}_n)$, the MSE is lower bounded by the CRLB \cite{nevel1973stochastic}
\vspace{-1mm}
\begin{equation}\label{eq_MMSE}
	\begin{aligned} & \mathbb{E}\left[ \left( \hat{x}_{n} - x \right)^2 \right] \ge \frac{1}{\sum_{i=1}^n I(x, \textbf{w}_i)},
	\end{aligned}
\vspace{-1mm}
\end{equation}
where $I(x, \textbf{w}_i) $ is the Fisher information \cite{Poor1994estimation} that can be computed by using \eqref{eq_pdf}:
\vspace{-1mm}
\begin{equation}\begin{aligned} \label{eq_fisher_sub}
	 I(x, \textbf{w}_i) &= \mathbb{E}\left[ \left. - \frac{\partial^2 \log p \left( y_i|x, \textbf{w}_i \right)}{\partial x^2} \right| x, \textbf{w}_i \right]  \\
	& = \frac{2\rho}{M}\left| \sum\limits_{m=1}^{M} \frac{2\pi d}{\lambda}(m-1) e^{j\left[w_{mi} - \frac{2\pi d}{\lambda}(m-1)x\right]} \right|^2. 
\end{aligned}\vspace{-1mm}
\end{equation}
% on the right-hand-side of \eqref{eq_MMSE}
% Note that the Fisher information $I(x, \textbf{w}_i) $ is the function of $\textbf{w}_i$. 
By optimizing the control actions $(\textbf{w}_1,\ldots,\textbf{w}_n)$ in the right-hand-side (RHS) of \eqref{eq_MMSE}, we obtain
\vspace{-1mm}
\begin{equation}\label{eq_opt_MMSE}
\frac{1}{n} \sum_{i=1}^n I(x, \textbf{w}_i) \le \frac{2M(M-1)^2\pi^2 d^2 \rho}{\lambda^2} \overset{\Delta}{=} I_{\max},
\vspace{-1mm}
\end{equation}
where the maximum Fisher information $I_{\max}$ in (\ref{eq_opt_MMSE}) is achieved if, and only if, for $i = 1, \ldots, n$
\vspace{-1mm}
\begin{equation}\label{eq_ctrl}
	\begin{aligned}
	\textbf{w}_i = \frac{\textbf{a}(x)}{\sqrt{M}} = \frac{1}{\sqrt{M}}\left[ 1, e^{j \frac{2\pi d}{\lambda} x}, \cdots, e^{j \frac{2\pi d}{\lambda}(M-1)x} \right]^\text{H}.
	\end{aligned}
\vspace{-1mm}
\end{equation}
Hence, the MSE is lower bounded by the minimum CRLB
\vspace{-1mm}
\begin{align}\label{eq_CRLB}
	\mathbb{E}\left[ \left( \hat{x}_{n} - x \right)^2 \right] \ge \frac{1}{nI_{\max}}.
\end{align}

\section{Recursive Analog Beam Tracking: Algorithm and Analysis}\label{sec_algorithm}

%\ifreport
%In this section, we first give some insights on the design of such an asymptotic optimal beam tracking algorithm. Then, a recursive beam tracking algorithm will be proposed. At last, some theoretical results are provided to support the convergence and optimumity of our algorithm.
%\else
In this section, we design a recursive analog beam tracking algorithm and prove that its MSE converges to the lower bound on the RHS of  (\ref{eq_CRLB}) with high probability in \emph{static} beam tracking scenarios.

\subsection{Algorithm Design}\label{sec_alg_design}

We develop a recursive analog beam tracking algorithm that consists of two stages: 1) coarse beam sweeping and 2) recursive beam tracking.

\begin{figure}
\centering
\includegraphics[width=6.5cm]{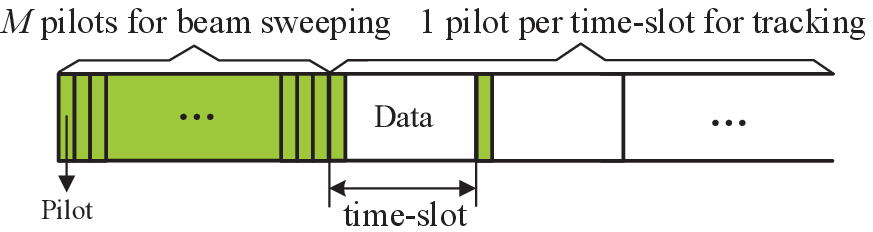}
\vspace{-2mm}
\caption{Frame structure.}
\vspace{-5mm}
\label{fig_frame}
\end{figure}

 \textbf{Recursive Analog Beam Tracking (Algorithm 1)}:

% which consists of two stages: 

%Recursive Analog Beam Tracking Algorithm:}
\begin{itemize}
\item[1)] \emph{Coarse Beam Sweeping:} %Use the exhaustive beam sweeping algorithm \cite{Hur2013Millimeter}.
Receive $M$ pilots successively (see Fig. \ref{fig_frame}). The analog beamforming vector $\tilde{\textbf{w}}_{m}$ for receiving the $m$-th pilot signal $\tilde{y}_m$ is 
\begin{align}\label{eq_codebook}
\tilde{\textbf{w}}_{m} = \frac{1}{\sqrt{M}}\textbf{a}\left(\frac{2m}{M} - \frac{M+1}{M}\right), m = 1, \ldots, M.
\end{align}
%which is the same with the DFT codebook in \cite{Heath2016overview}.
Find the initial estimate  $\hat{x}_0$ of the beam direction by
\begin{equation}\label{eq_initial}
\begin{aligned}
	\hat{x}_0 =\underset{\hat{x} \in \mathcal{X}}{\arg\max}~\left|\textbf{a}(\hat{x})^\text{H}\cdot \sum_{m=1}^M \tilde{y}_m \tilde{\textbf{w}}_{m} \right|,
\end{aligned}
\end{equation}
where $\mathcal{X} = \left\{\frac{1 - M_0}{M_0}, \frac{3 - M_0}{M_0}, \ldots, \frac{M_0-1}{M_0}\right\}$ and $M_0 \ge M$ determines the estimation resolution. 

\item[2)] \emph{Recursive Beam Tracking:} In each time-slot $n = 1, 2, \ldots$, one pilot is received at the beginning (see Fig. \ref{fig_frame}) using $\textbf{w}_{n}$, given by 
\begin{equation}\label{eq_est_ctrl}
	\begin{aligned}
	\textbf{w}_{n} = \frac{1}{\sqrt{M}}\textbf{a}\left(\hat{x}_{n-1}\right).
	%\left[ 1, e^{j \frac{2\pi d}{\lambda} \hat{x}_{n}}, \cdots, e^{j \frac{2\pi d}{\lambda}(M-1)\hat{x}_{n}} \right]^\text{H}.
	\end{aligned}
\end{equation}
The estimate $\hat{x}_{n}$ of the beam direction is updated by 
\begin{equation}\label{eq_est}
\begin{aligned}
\hat{x}_{n} = \left[ \hat{x}_{n-1} - a_n \operatorname{Im}\left\{ y_{n} \right\} \right]_{-1}^1,%\frac{\lambda}{\sqrt{M}(M-1)\pi d}
\end{aligned}
\end{equation}
where $[x]_{b}^{c} = \max\left\{ \min\{ x, c \}, b \right\}$ and $a_n > 0$ is the step-size that will be specified later. 
\end{itemize}

In \emph{Stage 1}, the exhaustive beam sweeping is used, and an initial estimate $\hat{x}_0$ is obtained in \eqref{eq_initial} by using the orthogonal matching pursuit method \cite{Alkhateeb2015Compressed}. Its resolution is adapted by the size $M_0$ of the dictionary $\mathcal{X}$, and a larger $M_0$ provides a more accurate estimate. Our simulations suggest that, if the SNR $\rho \geq 0$dB, $M_0 = 4M$, and $M = 16$, a good initial estimate $\hat{x}_0$ within the mainlobe $\mathcal{B}(x_0)$ (e.g., see Fig. \ref{fig_imagpart}), defined by
\vspace{-0.5mm}
\begin{equation}\label{eq_mainlobe}
	\mathcal{B}(x_0) =\left( x_0 - \frac{\lambda}{Md}, x_0 +  \frac{\lambda}{Md}\right) \bigcap [-1, 1],\vspace{-0.5mm}
\end{equation}
can be obtained with a probability higher than $99.99\%$.\footnote{One can use more time-slots (pilot resources) to support lower SNR in \emph{Stage 1}. As \emph{Stage 1} is executed only once, this will not increase the total pilot overhead by much.}

In \emph{Stage 2}, the estimate $\hat{x}_n$ and the control $\mathbf{w}_n$ are updated recursively to realize an accurate tracking performance. The recursive beam tracker in \eqref{eq_est} is motivated by the following maximum likelihood (ML) problem:
\vspace{-0.5mm}
\begin{equation}
\label{eq_ML_estimator}\underset{\hat{x}_n}{\max}\left\{\underset{\textbf{w}_n}{\max}~\sum_{i=1}^n\mathbb{E}\bigg[\log p\left( y_i|\hat{x}_n,\!\textbf{w}_i\!\right)\bigg| \begin{matrix} \hat{x}_n,\!\textbf{w}_1,\!\ldots,\!\textbf{w}_i, \\ \!y_1,\!\ldots,\!y_{i-1}\end{matrix}\!\bigg]\right\}.
\vspace{-0.5mm}
%\underset{\hat{x}_{n} \in [-1, 1]}{\max} \sum_{i = 1}^n \log p\left( y_i|\hat{x}_{n}, \textbf{w}_i \right).
\end{equation}
Rather than directly solve \eqref{eq_ML_estimator}, we propose to use the stochastic Newton's method, given by \cite[Section 10.2]{nevel1973stochastic}
\vspace{-0.5mm}
\begin{align}\label{eq_sto_NT}\hat{x}_{n} &~= \left[ \hat{x}_{n-1} - a_{n} \cdot \frac{\frac{\partial \log p\left( y_n|\hat{x}_{n-1}, \textbf{w}_n \right)}{\partial \hat{x}_{n-1}}}{\mathbb{E}\left[\left.\frac{\partial^2 \log p\left( y_n|\hat{x}_{n-1}, \textbf{w}_n \right)}{\partial \hat{x}_{n-1}^2}\right| \hat{x}_{n-1}, \textbf{w}_n \right]} \right]_{-1}^1 \nonumber \\
&~= \left[ \hat{x}_{n-1} + a_{n} \cdot \frac{\frac{\partial \log p\left( y_n|\hat{x}_{n-1}, \textbf{w}_n \right)}{\partial \hat{x}_{n-1}}}{I(\hat{x}_{n-1}, \textbf{w}_n)} \right]_{-1}^1,\vspace{-0.5mm}\end{align}
where the control $\textbf{w}_n$ can be obtained by maximizing the Fisher information $I(\hat{x}_{n-1}, \textbf{w}_n)$, which yields \eqref{eq_est_ctrl}. By plugging \eqref{eq_pdf}, \eqref{eq_fisher_sub} and \eqref{eq_est_ctrl} into \eqref{eq_sto_NT}, we can derive the low complexity recursive beam tracker in \eqref{eq_est}.

{\blue \subsection{Multiple Stable Points for Recursive Procedure}

To obtain the points that the recursive procedure \eqref{eq_est_ctrl} and (\ref{eq_est}) might converge to, we will introduce its corresponding ordinary differential equation (ODE). Using \eqref{eq_observation} and \eqref{eq_est_ctrl}, the recursive beam tracker in (\ref{eq_est}) can also be expressed as
\begin{align}
\hat{x}_{n} = \left[ \hat{x}_{n-1} + a_n \left( f(\hat{x}_{n-1}, x) - \frac{\operatorname{Im}\left\{{z}_n\right\}}{\sqrt{\rho}} \right) \right]_{-1}^1,
\end{align}
where function $f: \mathbb{R} \times  \mathbb{R}\mapsto \mathbb{R}$ is defined as
\begin{equation}\label{eq_fx}
	\begin{aligned}
	f(v, x) \overset{\Delta}{=} - \frac{1}{\sqrt{M}}\operatorname{Im}\left\{\textbf{a}(v)^\text{H}\textbf{a}(x)\right\}.
%	& ~~~f(v,x) \nonumber \\
%		& = \frac{\lambda}{2\pi d\sqrt{M}(M-1)}\mathbb{E}_{z_n}\left[ \left.\frac{\partial \log p \left( y_n|v, \textbf{w}_n \right)}{\partial v} \right|_{y_n = \textbf{w}_n^\text{H}\textbf{a}(x) + \frac{z_n}{\sqrt{\rho}}  	
%		 }\right]_{\textbf{w}_n = \frac{1}{\sqrt{M}} \textbf{a}(v)} \\
%		& = -\frac{1}{\sqrt{M}}\operatorname{Im}\left\{ \textbf{a}\left(v\right)^\text{H}\textbf{a}(x)\right\}.	
		%-\operatorname{Im}\left\{ \frac{1}{\sqrt{M}}\sum_{m=1}^M e^{j\frac{2\pi d}{\lambda}(m-1)(v-x)} \right\}.
	\end{aligned}
\end{equation}
This recursive procedure can be seen as a noisy, discrete-time approximation of the following ODE \cite[Section 2.1]{borkar2008stochastic}
\begin{align}\label{eq_ODE}
\frac{d \hat{x}(t)}{dt} \!=\!\left\{ \begin{array}{cl} \max\{f(-1, x),0\} &~\text{if}~\hat{x}(t) = -1 \\
f(\hat{x}(t),x) &~\text{if}~-1<\hat{x}(t) < 1 \\
\min\{f(1, x),0\} &~\text{if}~\hat{x}(t) = 1,\end{array}\right.
\end{align}
with $t \geq 0$ and $\hat{x}(0)\!=\!\hat{x}_0$. According to \cite{kushner2003stochastic, borkar2008stochastic}, the recursive procedure will converge to one of the stable points of the ODE \eqref{eq_ODE}. Here the stable point of the ODE \eqref{eq_ODE} is defined as a point $v_0$ that satisfies $f(v_0, x)\!=\!0$ and $f_v'(v_0, x)\!<\!0$, which means that any starting point from a certain neighbourhood of $v_0$ will make the ODE converge to $v_0$ itself.

As depicted in Fig. \ref{fig_imagpart}, $f(v, x)$ is not monotonic in $v$ (i.e., Problem \eqref{eq_problem} is non-convex), and within each lobe (e.g., the mainlobe or the sidelobe) of the antenna array pattern, there exists one stable point. The \emph{local optimal stable points} for the recursive procedure are given by
\begin{equation}\label{eq_stable_points}\begin{aligned}
\mathcal{S}(x)\!=\!&\left\{ v \in (-1, 1]: f(v, x) =0, f_v'(v, x) <0\right\} \\
		%&\cup \left\{ v \in \{\pm 1\}: v\cdot f(v, x) > 0\right\} \\
       =\!&\left\{ v_k \in (-1, 1]: v_k = x + \frac{k\lambda}{(M-1)d}, k \in \mathbb{Z} \right\}.
       %& \cup \left\{ v \in \{\pm 1\}: v\cdot f(v, x) > 0\right\},
		% &\bigcup \left\{ v = \frac{f(1, x)}{|f(1, x)|} : f(1, x) \neq 0  \right\}.
\end{aligned}\end{equation}
Note that except for $x$, the antenna array gain is quite low at other local optimal stable points in $\mathcal{S}(x)$, where the loss of antenna array gain is nearly 20dB and will be higher if more antennas are configured. Hence, one key challenge is \emph{how to ensure that Algorithm 1 converges to the real direction $x$, instead of other local optimal stable points  in $\mathcal{S}(x)$.}
}

\begin{figure}[!t]
\centering
\includegraphics[width=7cm]{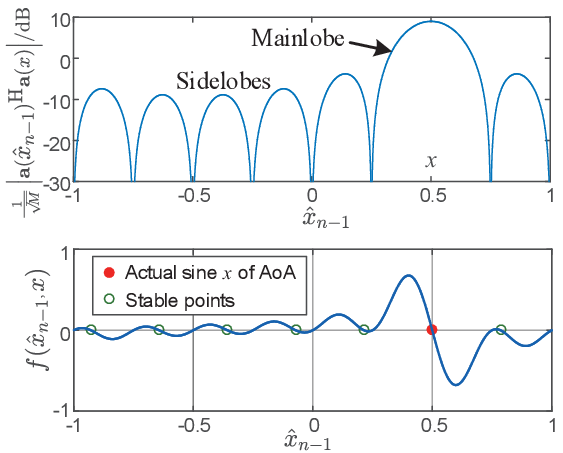}
\vspace{-2mm}
\caption{$\frac{1}{\sqrt{M}}\left|\textbf{a}(\hat{x}_{n-1})^\text{H}\textbf{a}(x)\right|$ and $f(\hat{x}_{n-1}, x)$ vs. $\hat{x}_{n-1}$ for $M = 8$, $x = 0.5$, $d = 0.5\lambda$. Note that the stable points are not at the sidelobe peaks.}
\vspace{-5mm}
\label{fig_imagpart}
\end{figure}

\subsection{Step-size Design and Asymptotic Optimality Analysis}\label{sec_performance_analysis}

In \emph{static} beam tracking scenarios, we adopt the widely used diminishing step-sizes, given by \cite{nevel1973stochastic,kushner2003stochastic, borkar2008stochastic}
\begin{equation}\label{eq_stepsize}
a_n = \frac{\alpha}{n + N_0}, ~~n = 1, 2, \ldots,
\end{equation}
where $\alpha > 0$ and $N_0 \ge 0$. We use the stochastic approximation and recursive estimation
theory \cite{nevel1973stochastic,kushner2003stochastic, borkar2008stochastic} to analyze Algorithm 1. In particular, we now develop a series of three theorems to resolve the challenge mentioned above.

%In order to handle the difficulties mentioned above, we propose to study the properties of the recursive algorithm using the divide-and-conquer method, where the problem is divided into several parts: firstly consider the convergence of the recursive algorithm, secondly provide a sufficient condition for the convergence to the optimum solution, and at last derive the optimum step-size to achieve the CRLB.

%In this part, we will prove that the proposed algorithm can converge to the optimum point $x$ and achieve the optimum CRLB in (\ref{eq_CRLB}) asymptotically, which mainly contains three steps: (i) Find a series of step-size that ensures convergence. (ii) Consider the convergence to optimum point $x$. (iii) Derive the parameters to approach the optimum CRLB in (\ref{eq_CRLB}).

\begin{theorem}[\textbf{Convergence to Stable Points}]\label{th_convergence}
\vspace{-1mm}
If $a_n$ is given by (\ref{eq_stepsize}) with any $\alpha > 0$ and $N_0 \ge 0$, then $\hat{x}_n$  converges to a unique point within $\mathcal{S}(x) \cup \{ -1\} \cup \{1 \}$ with probability one.
%\begin{equation*}
%\mathcal{E} = \left\{ \tilde{x}_k \in [-1, 1]: \tilde{x}_k = x + \frac{k\lambda}{(M-1)d}, k \in \mathbb{Z} \right\}
%\end{equation*}}
%contains the stationary points\footnote{It can be obtained from the corresponding ODE, see \cite{} for details.}.
\vspace{-1mm}
\end{theorem}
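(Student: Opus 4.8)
The plan is to invoke the standard ODE method for stochastic approximation (e.g. \cite[Section 2.1]{borkar2008stochastic} or \cite{kushner2003stochastic}) and verify its hypotheses for the recursion in \eqref{eq_est}, rewritten as $\hat{x}_{n} = [\hat{x}_{n-1} + a_n(f(\hat{x}_{n-1},x) - \operatorname{Im}\{z_n\}/\sqrt{\rho})]_{-1}^1$. First I would check the routine conditions: (i) the map $v \mapsto f(v,x)$ in \eqref{eq_fx} is $C^\infty$ and Lipschitz on the compact set $[-1,1]$, since it is a finite trigonometric sum; (ii) the step-sizes \eqref{eq_stepsize} satisfy $\sum_n a_n = \infty$ and $\sum_n a_n^2 < \infty$; (iii) the noise term $\operatorname{Im}\{z_n\}/\sqrt{\rho}$ is an i.i.d.\ (hence martingale-difference) sequence with zero mean and finite variance $1/(2\rho)$, independent of the past; and (iv) the iterates are bounded by construction, because of the projection $[\cdot]_{-1}^1$ onto $[-1,1]$. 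These four facts place us squarely in the setting where the projected stochastic approximation is known to track the projected ODE \eqref{eq_ODE} almost surely, and every sample path converges to a connected internally chain-transitive invariant set of that ODE contained in $[-1,1]$.

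The substantive part is then to identify those limit sets. I would argue that the ODE \eqref{eq_ODE} has only isolated equilibria, so the only possible internally chain-transitive invariant sets are single equilibrium points, which forces convergence to one such point. To do this I would: (a) note that $f(v,x) = -\frac{1}{\sqrt M}\operatorname{Im}\{\mathbf{a}(v)^{\text H}\mathbf{a}(x)\}$ vanishes exactly at the points where $\mathbf{a}(v)^{\text H}\mathbf{a}(x)$ is real, and a short computation (summing the geometric series) shows that in the open interior these are precisely the points of $\mathcal{S}(x)$ together with the \emph{unstable} zeros where $f_v'>0$ — using the closed-form in \eqref{eq_stable_points} these zeros are isolated; (b) handle the two boundary points $v=\pm 1$: each is an equilibrium of the projected ODE iff the drift points outward there (i.e.\ $f(-1,x)\le 0$, resp.\ $f(1,x)\ge 0$), in which case it is a stable equilibrium of the constrained dynamics, and otherwise the flow immediately enters the interior so $\pm1$ is not in any invariant set. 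Since the equilibrium set $\mathcal{S}(x)\cup\{-1\}\cup\{1\}$ (intersected appropriately) is finite and therefore totally disconnected, the only connected internally chain-transitive invariant sets are singletons, and a convergent stochastic-approximation path must settle at one of them; ruling out the interior unstable zeros follows from the standard result that a noisy stochastic approximation with nondegenerate martingale noise almost surely avoids unstable equilibria (the noise variance $1/(2\rho)>0$ is bounded below near any such point, so the "avoidance of traps" / Pemantle-type argument applies). This leaves exactly $\mathcal{S}(x)\cup\{-1\}\cup\{1\}$ as stated.

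The main obstacle I anticipate is the careful treatment of the boundary behavior at $v=\pm1$ and the rigorous application of the unstable-equilibrium-avoidance result: one must confirm that the projected ODE \eqref{eq_ODE} genuinely has the Kushner--Clark form required by the cited theorems (the projection direction, the fact that the constraint set $[-1,1]$ has a nonempty interior and a smooth boundary so no pathological cone behavior arises), and that the martingale-difference noise is "rich enough" in a neighborhood of each interior unstable zero to guarantee escape with probability one. A cleaner alternative for the escape part, if the full avoidance machinery is deemed too heavy here, is to observe that $\operatorname{Im}\{z_n\}$ has a density that is strictly positive on all of $\mathbb{R}$, so the iterate is perturbed in both directions with positive conditional probability bounded away from $0$ uniformly near any interior unstable zero; a standard Borel--Cantelli / excursion argument then shows the iterate cannot converge to such a point. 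Everything else — Lipschitz drift, bounded iterates, summability of step-sizes, martingale noise — is routine verification.
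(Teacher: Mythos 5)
Your proposal is correct and follows essentially the same route as the paper, which proves this result by casting \eqref{eq_est} as a projected stochastic approximation and invoking Theorem 5.2.1 of \cite{kushner2003stochastic} so that $\hat{x}_n$ tracks the projected ODE \eqref{eq_ODE} and converges to its equilibria in $[-1,1]$. Your additional care in ruling out the interior unstable zeros via a Pemantle-type trap-avoidance (or density/excursion) argument, exploiting the nondegenerate additive Gaussian noise, is exactly the kind of detail the paper relegates to its technical report, and your verification of the Lipschitz drift, step-size, martingale-noise, and boundedness conditions matches the standard hypotheses used there.
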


\ifreport
%As mentioned in Section \ref{sec_alg_design}, there are multiple isolated stable equilibrium points for the ODE in (\ref{eq_ode}), which are contained in the following set: 
%\begin{align}
%\mathcal{E} = \left\{ \tilde{x}_k \in [-1, 1]: \tilde{x}_k = x + \frac{k\lambda}{(M-1)d}, k \in \mathbb{Z} \right\}.
%\end{align}
%
%Since the proposed algorithm is performed with the noisy observation, it should be confirmed whether the algorithm can converge to a certain point or not. Therefore, we give the following theorem:
%
%\begin{theorem}[\textbf{Convergence to Stable Equilibrium Points}]\label{th_convergence}
%\textit{For all $\alpha > 0$, if $a_n = \frac{\alpha}{n}$, then $\hat{x}_n$  converges to a unique point within the set $\mathcal{E} \cup \{ -1\} \cup \{1 \}$ with probability one as $n \rightarrow \infty$.}
%
%\end{theorem}

\begin{proof}
See Appendix \ref{proof_converge}.
\end{proof}

\else

\begin{proof}[Proof Sketch]
%We use the stochastic approximation theory to establish Theorem \ref{th_convergence}. In particular, it
This theorem is proven by applying Theorem 5.2.1 of \cite{kushner2003stochastic}. The proof is relegated to our technical report \cite{Li2017analog} due to space limitations.
\vspace{-1mm}
\end{proof}

\fi

Hence, for general step-size parameters $\alpha$ and $N_0$ in \eqref{eq_stepsize}, $\hat{x}_n$ converges to a stable point in $\mathcal{S}(x)$ or a boundary point. 

\begin{theorem}[\textbf{Convergence to the Real Direction $x$}]\label{th_lock}
\vspace{-1mm}
If (i) the initial point satisfies $\hat{x}_0 \in \mathcal{B}$, (ii) $a_n$ is given by (\ref{eq_stepsize}) with any $\alpha > 0$, 
then there exist  $N_0 \ge 0$ and $C(\hat{x}_0)>0$ such that
%\textit{\red If SNR is $\rho$, $a_n$ is given by (\ref{eq_stepsize}) with any $\alpha > 0$ and certain $N_0 \ge 0$, then the probability that $\hat{x}_n$ converges to the optimum solution $x$ as $n \rightarrow \infty$ is given by\footnote{The equation $f(x) = \Theta(g(x))$ implies that $k_1\cdot g(x) \le f(x) \le k_2 \cdot g(x)$ for some constants $k_1 > 0, k_2 > 0, x_0$ and $\forall x \ge x_0$.}
\begin{equation}\label{eq_lock}
P\left( \left. \hat{x}_n \rightarrow x \right| \hat{x}_0 \in \mathcal{B}(x) \right) \geq 1- 2e^{-C(\hat{x}_0)\frac{\rho}{\alpha^2}}.
\end{equation}
%where $g(x) = \Omega(x)$ means that $\underset{x\rightarrow\infty}{\lim\inf}~g(x)/x > 0$.

%\begin{equation}\label{eq_lock}
%\log \left[ 1 - P\left( \left. \hat{x}_n \rightarrow x \right| \hat{x}_{n_0} \in \mathcal{B} \right) \right] = - \Theta \left( \frac{\rho}{\alpha^2} \right), 
%\end{equation}
%where $g(x) = \Theta (x)$ means that $0<\lim_{x\rightarrow \infty} g(x)/x <\infty$.

%\begin{equation}\label{eq_lock}
%	 1 - 2e^{-C_\textup{lb}\frac{\rho\delta^2}{\alpha^2}} \le P\left( \left. \hat{x}_n \rightarrow x \right| \hat{x}_{0} = x_{0} \right) \le 1 - 2e^{-C_\textup{ub}\frac{\rho\delta^2}{\alpha^2}}, % b(0)\cdot\operatorname{Var}\left[ \tilde{N}_n \right]
%\end{equation}
%\begin{equation}\label{eq_lock}
%	- \log \left[ 1 - P\left( \left. \hat{x}_n \rightarrow x \right| \hat{x}_{n_0} \in \mathcal{B} \right) \right] = \mathcal{O} \left( \frac{\rho\delta^2}{\alpha^2} \right), 
%\end{equation}
%where $\delta = \inf_{v \in \partial \mathcal{B}} \left| v - \hat{x}_{n_0} \right|$ and $\partial \mathcal{B}$ is the boundary of $\mathcal{B}$. 
%}
\vspace{-1mm}
\end{theorem}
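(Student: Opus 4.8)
The plan is to prove a \emph{lock-in} statement in the spirit of stochastic-approximation theory \cite{kushner2003stochastic,borkar2008stochastic}: once the iterate is inside the mainlobe, with overwhelming probability it stays inside a slightly smaller interval forever, and there it is forced to converge to $x$. Concretely, fix $r_1\in(|\hat x_0-x|,\lambda/(Md))$, set $I=(x-r_1,x+r_1)$ and $\tau=\inf\{n:|\hat x_n-x|\ge r_1\}$. On the event $\{\tau=\infty\}$ Theorem~\ref{th_convergence} says $\hat x_n$ converges to some point of $\mathcal S(x)\cup\{-1,1\}$ lying in $\bar I$; but by \eqref{eq_stable_points} the other stable points are at distance $\lambda/((M-1)d)>\lambda/(Md)>r_1$ from $x$, so the only candidate in $\mathcal S(x)\cap\bar I$ is $x$, and a boundary point $\pm1\in\bar I$ cannot be the limit because there $f(\cdot,x)$ points strictly inward while $\sum_n a_n\operatorname{Im}\{z_n\}/\sqrt\rho$ converges almost surely (as $\sum_n a_n^2<\infty$). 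Hence $\{\tau=\infty\}\subseteq\{\hat x_n\to x\}$, and it suffices to bound $P(\tau<\infty)\le 2e^{-C(\hat x_0)\rho/\alpha^2}$.

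The heart of the argument is to convert ``escape'' into ``the accumulated observation noise is large''. Put $\xi_n=\operatorname{Im}\{z_n\}/\sqrt\rho\sim\mathcal N(0,\tfrac1{2\rho})$, which is i.i.d.\ and independent of the past, let $S_n=\sum_{k=1}^n a_k\xi_k$ be the associated zero-mean Gaussian martingale, and introduce the auxiliary process $g_n=(\hat x_n-x)+S_n$, which obeys $g_n=g_{n-1}+a_nf(\hat x_{n-1},x)$ whenever the clip $[\cdot]_{-1}^1$ is inactive. From the sine-kernel expression for $f$ in \eqref{eq_fx} one checks that $|f|\le\sqrt M$ and that $f(v,x)$ points strictly toward $x$ for every $v\in\mathcal B(x)\setminus\{x\}$; since $r_1<\lambda/(Md)$ this applies to $\hat x_{k-1}$ for all $k\le\tau$. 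Assuming $\sup_{m\le\tau}|S_m|\le s$ for a constant $s>0$, an induction over the three regimes $g_{n-1}>s$, $|g_{n-1}|\le s$, $g_{n-1}<-s$ gives $|g_n|\le\max(|\hat x_0-x|,s)+a_1\sqrt M$ for all $n\le\tau$. Taking $s=(r_1-|\hat x_0-x|)/3$ and choosing $N_0$ large enough (as a function of $\alpha$, $M$, $d/\lambda$ and the margin $\lambda/(Md)-|\hat x_0-x|$) that $a_1\sqrt M=\alpha\sqrt M/(1+N_0)$ is small, this forces $|\hat x_\tau-x|=|g_\tau-S_\tau|<r_1$, contradicting the definition of $\tau$; hence $\{\tau<\infty\}\subseteq\{\sup_m|S_m|>s\}$, up to the atypical event that a single step clips the iterate directly onto $\pm1$ far from $x$, whose probability is bounded by a summable series of Gaussian tails that is itself $O(e^{-C\rho/\alpha^2})$.

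It then remains to bound the martingale. Since $S_n\sim\mathcal N\!\big(0,\tfrac1{2\rho}\sum_{k\le n}a_k^2\big)$ and $\Sigma:=\sum_{k\ge1}a_k^2=\alpha^2\sum_{k\ge1}(k+N_0)^{-2}=:\alpha^2\zeta$, the process $\exp(\lambda S_n-\tfrac{\lambda^2\Sigma}{4\rho})$ is a nonnegative supermartingale with value $1$ at $n=0$; Doob's maximal inequality together with optimization over $\lambda$ gives $P(\sup_m S_m\ge s)\le e^{-\rho s^2/\Sigma}$ and, symmetrically, $P(\inf_m S_m\le -s)\le e^{-\rho s^2/\Sigma}$, hence $P(\sup_m|S_m|\ge s)\le 2e^{-\rho s^2/(\alpha^2\zeta)}$. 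Combining with the previous paragraph yields $P(\tau<\infty)\le 2e^{-C(\hat x_0)\rho/\alpha^2}$ with $C(\hat x_0)=s^2/\zeta=(r_1-|\hat x_0-x|)^2/(9\zeta)>0$, which is \eqref{eq_lock}.

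I expect the middle step to be the main obstacle. The trouble is that the restoring drift $f(\cdot,x)$ degenerates to zero at the edge of the mainlobe — indeed the two endpoints of $\mathcal B(x)$ in \eqref{eq_mainlobe} are precisely the nulls of the array pattern closest to $x$ — so there is no uniform contraction (or descent) rate and one cannot simply invoke a contraction-based Lyapunov bound; the $g_n$-bookkeeping above is designed to sidestep this by using only that the drift, however weak, never points outward inside $\mathcal B(x)$. A secondary nuisance is the clip $[\cdot]_{-1}^1$ when $x$ lies within $r_1$ of $\pm1$: there the clip merely prevents escape on the side facing the boundary, so the same estimate applies one-sidedly and the prefactor $2$ in \eqref{eq_lock} is still valid.
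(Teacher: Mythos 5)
Your argument is sound in substance, but it takes a genuinely different route from the paper. The paper's proof follows Chapter 4 of Borkar's book: it constructs two continuous-time processes (the interpolated trajectory of $\{\hat{x}_n\}$ and the ODE \eqref{eq_ODE} trajectories started from it), derives a sufficient condition for $\hat{x}_n \rightarrow x$ in terms of how closely the interpolation tracks the ODE inside the domain of attraction of $x$, and then lower-bounds the probability of that condition. You instead work entirely in discrete time: you reduce the theorem to a lock-in/escape bound for the stopped process, exploit the one-dimensional structure via the decomposition $\hat{x}_n - x = g_n - S_n$ with $S_n=\sum_k a_k\operatorname{Im}\{z_k\}/\sqrt{\rho}$ an explicit Gaussian martingale, show by the drift-sign (three-regime) induction that escape from $(x-r_1,x+r_1)$ forces $\sup_m|S_m|>s$, and close with a Chernoff--Doob maximal inequality, invoking Theorem \ref{th_convergence} and \eqref{eq_stable_points} to upgrade non-escape to $\hat{x}_n\rightarrow x$. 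Your route avoids the interpolation/Gronwall machinery and the issue you correctly identify (no uniform contraction, since the drift vanishes at the edge of $\mathcal{B}(x)$) by using only the sign of $f(\cdot,x)$ inside the mainlobe; it is more elementary and yields an explicit constant $C(\hat{x}_0)\propto(r_1-|\hat{x}_0-x|)^2$, at the cost of being specific to the scalar setting, whereas the paper's ODE-comparison argument is the generic stochastic-approximation template and extends more readily.

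Two small repairs you should make, neither of which affects the conclusion. First, $\exp(\lambda S_n-\lambda^2\Sigma/(4\rho))$ is not a supermartingale with initial value $1$; the correct object is the exponential martingale $\exp\bigl(\lambda S_n-\tfrac{\lambda^2}{4\rho}\sum_{k\le n}a_k^2\bigr)$ (or apply Doob's maximal inequality to the submartingale $e^{\lambda S_n}$), after which bounding $\sum_{k\le n}a_k^2\le\Sigma$ and optimizing over $\lambda$ gives exactly your $e^{-\rho s^2/\Sigma}$ tail. Second, the ``atypical single-step clipping'' term is unnecessary when $(x-r_1,x+r_1)\subseteq[-1,1]$: on the good noise event your induction bounds the \emph{unclipped} update as well, so it never leaves $[-1,1]$ and the projection is simply inactive up to $\tau$; in the boundary case your one-sided variant applies since clipping at the nearer endpoint can only pull the iterate toward $x$. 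With these fixes the bound is exactly $2e^{-C(\hat{x}_0)\rho/\alpha^2}$ with no extra additive term, matching \eqref{eq_lock}, and the requirement that $a_1\sqrt{M}=\alpha\sqrt{M}/(1+N_0)$ be small relative to the margin is precisely the ``there exists $N_0\ge 0$'' clause of the statement.
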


\ifreport

\begin{proof}[Proof Sketch]
Motivated by Chapter 4 of \cite{borkar2008stochastic}, we will prove this theorem in three steps: in \emph{Step 1}, we will construct two continuous processes based on the discrete process $\left\{ \hat{x}_n \right\}$; in \emph{Step 2}, using these continuous processes, we form a sufficient condition for the convergence of the discrete process $\left\{ \hat{x}_n \right\}$;  in \emph{Step 3}, we will derive the probability lower bound for this condition, which is also a lower bound for $P\left( \left. \hat{x}_n\!\rightarrow\!x \right| \hat{x}_{n_0}\!\in\!\mathcal{B} \right)$. The details are provided in Appendix \ref{proof_lock}.
% Motivated by Chapter 4 of \cite{borkar2008stochastic}, we will prove this theorem in three steps: in \emph{Step 1}, we will construct a time-invariant set $\mathcal{I}$ that contains the real direction $x$ within the mainlobe, i.e., $x \in \mathcal{I} \subset \mathcal{B}$; in \emph{Step 2}, we will get the lower bound for the probability that the sequence $\{\hat{x}_n: n \ge n_0\}$ remains inside $\mathcal{I}$, i.e., $\hat{x}_n\!\in\!\mathcal{I}~\text{for}~n\!\ge\!n_0$; in \emph{Step 3}, we will show that this lower bound is also a lower bound for $P\left( \left. \hat{x}_n\!\rightarrow\!x \right| \hat{x}_{n_0}\!\in\!\mathcal{B} \right)$. The details are provided in Appendix \ref{proof_lock}.
\end{proof}

\else

\begin{proof}[Proof Sketch]
Motivated by Chapter 4 of \cite{borkar2008stochastic}, we prove this theorem in three steps: in \emph{Step 1},  construct two continuous processes based on the discrete process $\left\{ \hat{x}_n \right\}$; in \emph{Step 2}, using these continuous processes, we form a sufficient condition for the convergence of the discrete process $\left\{ \hat{x}_n \right\}$;  in \emph{Step 3}, derive the probability lower bound for this condition, which is also a lower bound for $P\left( \left. \hat{x}_n\!\rightarrow\!x \right| \hat{x}_0\!\in\!\mathcal{B}\left(x\right) \right)$. 
%The bound is obtained by using the theoretical analysis tool on multiple stable points problems; 
The details are provided in \cite{Li2017analog}.
\vspace{-1mm}
\end{proof}

\fi

\ifreport
%\begin{corollary}\label{co_lock}
%If $M \rightarrow \infty$, $P\left( \left. \hat{x}_n \rightarrow x \right| \hat{x}_{0} = x_{0} \right) = 1$ if $x_{0} \in \mathcal{B}$ and (\ref{eq_lock_constraint}) is true.
%\end{corollary}
%
%\begin{proof}
%See Appendix \ref{proof_corollary}.
%\end{proof}
%
%\begin{corollary}\label{co_lock2}
%If $\rho \rightarrow \infty$, $P\left( \left. \hat{x}_n \rightarrow x \right| \hat{x}_{0} = x_{0} \right) = 1$ if $x_{0} \in \mathcal{B}$ and (\ref{eq_lock_constraint}) is true.
%\end{corollary}
%
%\begin{proof}
%Letting $\rho\rightarrow\infty$ in (\ref{eq_lock}) yields the result.
%\end{proof}
%
%Corollary \ref{co_lock} is very useful in the large antenna regime, where our algorithm will show better convergence property. Corollary \ref{co_lock2} is very straightforward, since $\rho\rightarrow\infty$ means no noise will influence the convergence. Under this circumstance, the only need is to use appropriate step-size to make sure not going out of the mainlobe.

\fi

By Theorem \ref{th_lock}, if the initial point $\hat{x}_0$ is in the mainlobe $\mathcal{B}(x)$, the probability that $\hat{x}_n$ does not converge to $x$ decades \emph{exponentially} with respect to ${\rho}/{\alpha^2}$. Hence, one can increase the SNR $\rho$ and reduce the step-size parameter $\alpha$ to ensure $\hat{x}_n\!\rightarrow\!x$ with high probability. Under the condition of $\rho = 10\text{dB}$ and $M = 8\text{-}128$, typical values of $N_0$ required by the sufficient condition in Theorem \ref{th_lock} are 10-50. However, one can choose any $N_0\!\geq\!0$ to achieve a sufficiently high probability of $\hat{x}_n\!\rightarrow\!x$ in simulations.

% By Theorem \ref{th_lock}, if the initial point $\hat{x}_0$ is in the mainlobe $\mathcal{B}$, the probability that $\hat{x}_n$ does not converge to $x$ decades \emph{exponentially} with respect to ${\rho}/{\alpha^2}$. Hence, one can increase the SNR $\rho$ and reduce the step-size parameter $\alpha$ to ensure $\hat{x}_n\!\rightarrow\!x$ with high probability. Typical values of $N_0$ required by the sufficient condition in Theorem \ref{th_lock} are 10-50. However, one can choose any $N_0\!\geq\!0$ to achieve a sufficiently high probability of $\hat{x}_n\!\rightarrow\!x$  in simulations.  
%  increase  to ensure 

% provides a sufficient condition that ensures our algorithm to converge to the optimum point $x$ with very high probability. Specifically, a higher SNR and a smaller $\alpha$ are helpful for convergence to the optimum point, and vice versa. At last, we consider how to design the optimum step-size $a_n$, which is included in the following theorem:

\begin{theorem}[\textbf{Convergence to $x$ with the Minimum MSE}]
\vspace{-1mm}
\label{th_normal}
If (i) $a_n$ is given by (\ref{eq_stepsize}) with 
\begin{align}\label{eq_alpha}
\alpha = \frac{\lambda}{\sqrt{M}(M-1)\pi d} \overset{\Delta}{=} \alpha^*,
\end{align}
and any $N_0 \ge 0$, and (ii) $\hat{x}_n \rightarrow x$, 
then
\begin{equation}\label{eq_normal}
	\sqrt{n}\left(\hat{x}_n - x\right) \overset{d}{\rightarrow} \mathcal{N} \left(0, I_{\max}^{-1}\right),
\end{equation}
as $n \rightarrow \infty$, where $\overset{d}{\rightarrow}$ represents convergence in conditional distribution given $\hat{x}_n \rightarrow x$, and $I_{\max}$ is defined in \eqref{eq_opt_MMSE}. In addition, 
\begin{equation}\label{eq_normal2}
	\lim_{n\rightarrow\infty}~n~\mathbb{E}\left[\left(\hat{x}_n - x\right)^2\big| \hat{x}_n \rightarrow x\right] = I_{\max}^{-1}.
\end{equation}
\vspace{-1mm}
\end{theorem}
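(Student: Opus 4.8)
The plan is to recognize the \emph{Stage~2} recursion, conditioned on $\{\hat{x}_n\to x\}$, as a classical Robbins--Monro scheme and to invoke the central limit theorem for stochastic approximation (e.g.\ \cite[Ch.~10]{nevel1973stochastic} or \cite[Ch.~10]{kushner2003stochastic}). First, I would dispose of the truncation: taking $x$ in the interior of $[-1,1]$ (the case $x=\pm1$ would need a one-sided variant), the hypothesis $\hat{x}_n\to x$ forces the iterates into a small neighborhood of $x$ after some a.s.\ finite random time $n_1$, so for $n\ge n_1$ the clipping $[\,\cdot\,]_{-1}^1$ is inactive and
\begin{equation}
\hat{x}_n = \hat{x}_{n-1} + a_n\big(f(\hat{x}_{n-1},x) + \xi_n\big),\qquad \xi_n = -\frac{\operatorname{Im}\{z_n\}}{\sqrt{\rho}}.
\end{equation}
Here $\{\xi_n\}$ is a martingale-difference sequence for the natural filtration with $\mathbb{E}[\xi_n\mid\mathcal{F}_{n-1}]=0$, $\mathbb{E}[\xi_n^2\mid\mathcal{F}_{n-1}]=\frac{1}{2\rho}$ (since $\operatorname{Im}\{z_n\}\sim\mathcal{N}(0,\tfrac12)$) and all higher moments finite, so the Lindeberg condition of the SA CLT holds automatically.

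Second, I would linearize $f(\cdot,x)$ at the target. Since $f(\cdot,x)$ is analytic and $x\in\mathcal{S}(x)$ gives $f(x,x)=0$, we have $f(v,x)=f_v'(x,x)(v-x)+O((v-x)^2)$; writing $\textbf{a}(v)^{\text{H}}\textbf{a}(x)=\sum_{m=0}^{M-1}e^{jm\phi}$ with $\phi=\frac{2\pi d}{\lambda}(v-x)$ and differentiating $-\frac{1}{\sqrt{M}}\operatorname{Im}\{\cdot\}$ at $\phi=0$,
\begin{equation}
f_v'(x,x) = -\frac{1}{\sqrt{M}}\cdot\frac{2\pi d}{\lambda}\sum_{m=0}^{M-1}m = -\frac{\sqrt{M}(M-1)\pi d}{\lambda} = -\frac{1}{\alpha^*}\overset{\Delta}{=}-b .
\end{equation}
So near $x$ the mean field is $-b(\hat{x}_{n-1}-x)$ plus a second-order remainder that is $o(\hat{x}_{n-1}-x)$ a.s.\ on $\{\hat{x}_n\to x\}$ — exactly the regime covered by the SA CLT.

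Third, with $a_n=\alpha^*/(n+N_0)$ the gain-stability parameter is $\alpha^*b=1>\tfrac12$, which is precisely the condition yielding the $\sqrt{n}$ rate, so the CLT gives $\sqrt{n}(\hat{x}_n-x)\overset{d}{\rightarrow}\mathcal{N}(0,\Sigma)$ with $\Sigma$ the solution of the scalar Lyapunov equation $2(\tfrac12-\alpha^*b)\Sigma=-(\alpha^*)^2\mathbb{E}[\xi_n^2]$, i.e.
\begin{equation}
\Sigma = \frac{(\alpha^*)^2\,\mathbb{E}[\xi_n^2]}{2\alpha^*b-1} = \frac{(\alpha^*)^2}{2\rho} = \frac{\lambda^2}{2\rho M(M-1)^2\pi^2 d^2} = I_{\max}^{-1},
\end{equation}
using $(\alpha^*)^2=\lambda^2/(M(M-1)^2\pi^2 d^2)$ and \eqref{eq_opt_MMSE}. (One checks that $\alpha\mapsto\alpha^2/(2\alpha b-1)$ is minimized at $\alpha=1/b=\alpha^*$, so $\alpha^*$ is exactly the step-size constant that makes the limiting variance equal the minimum CRLB.) The conditioning on $\{\hat{x}_n\to x\}$ is handled by conditioning on the random localization time $n_1$, running the shift-invariant SA CLT on the recursion from $n_1$ onward, and noting the limit law is independent of $n_1$.

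Finally, to promote the distributional limit to convergence of the second moment \eqref{eq_normal2}, I would establish uniform integrability of $\{n(\hat{x}_n-x)^2\}$ (conditioned on $\{\hat{x}_n\to x\}$) via a bound $\sup_n \mathbb{E}[\,n^2(\hat{x}_n-x)^4\mid\hat{x}_n\to x\,]<\infty$, obtained by the usual recursive-inequality/induction argument on the even moments of the localized recursion, using the per-step contraction factor $1-2a_n b+o(a_n)$ and the bounded moments of $\xi_n$; together with the CLT this yields $n\,\mathbb{E}[(\hat{x}_n-x)^2\mid\hat{x}_n\to x]\to\Sigma=I_{\max}^{-1}$. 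I expect the main obstacle to be the bookkeeping around the conditioning and localization: one must verify carefully that both the SA CLT and the even-moment estimates can be run on the stopped process that stays near $x$ and that their conclusions do not depend on where the localization begins. Once that is set up, the linearization of $f$ and the arithmetic showing $\alpha^*b=1$ forces $\Sigma=I_{\max}^{-1}$ are routine.
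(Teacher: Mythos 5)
Your proposal is correct and follows essentially the same route as the paper: the paper's proof also treats the Stage-2 recursion as a stochastic-approximation/recursive-estimation scheme and applies the asymptotic-normality theorem (Theorem 6.6.1 of \cite{nevel1973stochastic}), with the same key computation that $f_v'(x,x)=-1/\alpha^*$ so the step-size condition holds with $\alpha^* b=1$ and the limiting variance collapses to $I_{\max}^{-1}$. Your linearization, variance arithmetic, and uniform-integrability step for \eqref{eq_normal2} match the intended argument, so no substantive gap to report.
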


\ifreport
\begin{proof}
See Appendix \ref{proof_normal}.
\end{proof}

\else

\begin{proof}[Proof Sketch]
\vspace{-4.5mm}
This theorem is proven by applying Theorem 6.6.1 in \cite{nevel1973stochastic}. The details are provided in \cite{Li2017analog}.
\vspace{-1mm}
\end{proof}

\fi

Theorem \ref{th_normal} tells us that $\alpha$ should not be too small: If $\alpha=\alpha^*$ in \eqref{eq_alpha}, then the minimum CRLB on the RHS of \eqref{eq_CRLB} is achieved asymptotically with high probability, which ensures the highest convergence rate. In practice, we suggest to choose $\alpha=\alpha^*$ and $N_0=0$ in \eqref{eq_stepsize}.   
Interestingly, Theorem \ref{th_normal} can be readily generalized to the track of smooth functions of $x$:
\begin{corollary}\label{co_1}
\vspace{-1mm}
If the conditions of Theorem \ref{th_normal} are satisfied, then for any first-order differentiable vector function $\textbf{u}(x)$
\begin{align}
\!\lim_{n\rightarrow\infty} n~\!\mathbb{E}\left[\left\| \textbf{u}(\hat{x}_n)\!-\!\textbf{u}(x)\right\|^2_2\Big| \hat{x}_n \rightarrow x\right]\!=\!\left\|{ {\textbf{u}}'(x) }\right\|^2_2 I_{\max}^{-1}.
\end{align}
\vspace{-4.5mm}
\end{corollary}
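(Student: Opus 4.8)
The plan is to obtain Corollary \ref{co_1} from Theorem \ref{th_normal} by the delta method, i.e.\ a first-order Taylor expansion of $\textbf{u}$ at $x$, the only genuine work being the justification of interchanging the limit in $n$ with the expectation. All statements below are understood under the conditional law given $\{\hat{x}_n \rightarrow x\}$, an event of positive probability by Theorem \ref{th_lock} whenever $\hat{x}_0 \in \mathcal{B}(x)$; write $e_n \overset{\Delta}{=} \hat{x}_n - x$, and note that $e_n$ stays in a fixed compact set because $\hat{x}_n, x \in [-1,1]$ by the projection in \eqref{eq_est}. Since $\textbf{u}$ is first-order differentiable, $\textbf{u}(\hat{x}_n) - \textbf{u}(x) = \textbf{u}'(x)\, e_n + R(\hat{x}_n)$ with $\|R(y)\|_2 = o(|y-x|)$ as $y \rightarrow x$, and $R$ is continuous, so $\|\textbf{u}(\hat{x}_n) - \textbf{u}(x)\|_2 \le B$ for some constant $B$ on $[-1,1]$. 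The target is to squeeze $n\,\mathbb{E}[\|\textbf{u}(\hat{x}_n) - \textbf{u}(x)\|_2^2]$ between $(\|\textbf{u}'(x)\|_2 \mp \epsilon)^2 I_{\max}^{-1}$ for every $\epsilon>0$.

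The key technical input is uniform integrability. Theorem \ref{th_normal} provides, conditionally, both the weak limit \eqref{eq_normal} of $\sqrt{n}\,e_n$ and the moment limit \eqref{eq_normal2}, $n\,\mathbb{E}[e_n^2]\rightarrow I_{\max}^{-1}$; convergence in distribution together with convergence of the second moments to that of the Gaussian limit $\mathcal{N}(0,I_{\max}^{-1})$ is equivalent to weak convergence plus uniform integrability of $\{n\,e_n^2\}$. Consequently, for any fixed $\delta>0$,
\[
n\,\mathbb{E}\!\left[e_n^2\,\mathbf{1}_{\{|e_n|\ge\delta\}}\right] \longrightarrow 0
\qquad\text{and}\qquad
n\,P\!\left(|e_n|\ge\delta\right) \longrightarrow 0 ,
\]
the first because $\{|e_n|\ge\delta\}=\{n e_n^2 \ge n\delta^2\}$ with $n\delta^2\rightarrow\infty$, the second by Markov's inequality. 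Now fix $\epsilon>0$ and pick $\delta>0$ with $\|R(y)\|_2 \le \epsilon|y-x|$ for $|y-x|<\delta$. On $\{|e_n|<\delta\}$ the triangle inequality gives $\max\{\|\textbf{u}'(x)\|_2-\epsilon,\,0\}\,|e_n| \le \|\textbf{u}(\hat{x}_n)-\textbf{u}(x)\|_2 \le (\|\textbf{u}'(x)\|_2+\epsilon)\,|e_n|$.

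To assemble, split $n\,\mathbb{E}[\|\textbf{u}(\hat{x}_n)-\textbf{u}(x)\|_2^2]$ over $\{|e_n|<\delta\}$ and its complement. On the complement the integrand is at most $B^2$, contributing at most $B^2\,n\,P(|e_n|\ge\delta)\rightarrow 0$. On $\{|e_n|<\delta\}$ the squared integrand is sandwiched by $(\|\textbf{u}'(x)\|_2\mp\epsilon)^2\,e_n^2\,\mathbf{1}_{\{|e_n|<\delta\}}$, and $n\,\mathbb{E}[e_n^2\,\mathbf{1}_{\{|e_n|<\delta\}}] = n\,\mathbb{E}[e_n^2] - n\,\mathbb{E}[e_n^2\,\mathbf{1}_{\{|e_n|\ge\delta\}}] \rightarrow I_{\max}^{-1}$ by \eqref{eq_normal2} and the display above. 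Hence $\liminf_n$ and $\limsup_n$ of $n\,\mathbb{E}[\|\textbf{u}(\hat{x}_n)-\textbf{u}(x)\|_2^2]$ both lie in $[(\|\textbf{u}'(x)\|_2-\epsilon)^2 I_{\max}^{-1},\,(\|\textbf{u}'(x)\|_2+\epsilon)^2 I_{\max}^{-1}]$; letting $\epsilon\downarrow 0$ yields the claim. The main obstacle is exactly this uniform-integrability step — converting the weak-convergence-plus-moment statement of Theorem \ref{th_normal} into uniform integrability of $\{n e_n^2\}$ and using it to defeat the diverging factor $n$ on the tail event $\{|e_n|\ge\delta\}$; without it, the pointwise $o(\cdot)$ bound on the Taylor remainder is not enough to pass to the rescaled expectations. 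Everything else is the routine delta method, with the boundedness of $\hat{x}_n$ on $[-1,1]$ conveniently disposing of the tail contribution.
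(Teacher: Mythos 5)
Your proof is correct and takes essentially the route the paper intends: the standard delta method applied to Theorem \ref{th_normal} (the paper's own details are deferred to the technical report \cite{Li2017analog}). The one nontrivial step—deducing uniform integrability of $\{n(\hat{x}_n-x)^2\}$ from the conditional weak convergence \eqref{eq_normal} together with the moment convergence \eqref{eq_normal2}, and using it plus the boundedness of $\mathbf{u}$ on $[-1,1]$ to kill the tail event $\{|\hat{x}_n-x|\ge\delta\}$ against the diverging factor $n$—is exactly the right justification, and the remaining $\epsilon$--$\delta$ sandwich is sound.
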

\begin{proof}
See our technical report \cite{Li2017analog}.
\end{proof}

For example, consider the channel response $\textbf{{h}}(x) =  \beta \textbf{{a}}(x)$. If $\alpha = \alpha^*$ and $N_0=0$, 
Corollary \ref{co_1} tells us that, with high probability, the minimum CRLB of $\textbf{{h}}(x)$ is achieved in the following limit:
%$\textbf{{h}}(\hat{x}_n)$ we derive the minimum CRLB $\textbf{{h}}(x)$ as
\begin{equation}\label{eq_CRLB_CR}
\begin{aligned}
	&~\lim_{n\rightarrow\infty}~n~\mathbb{E}\left[\left.\left\| \textbf{{h}}(\hat{x}_n) - \textbf{{h}}(x)\right\|^2_2 \right| \hat{x}_n \rightarrow x \right] \\
	= &~I_{\max}^{-1} \sum_{m=1}^{M-1}\left|\frac{\partial \left(\beta e^{-j \frac{2\pi d}{\lambda}m x}\right)}{\partial x}\right|^2 = \frac{(2M-1)\sigma^2}{3(M-1)}.
\end{aligned}
\end{equation}

%Combining the results of Theorem \ref{th_convergence}, \ref{th_lock} and \ref{th_normal}, we prove that the proposed algorithm can converge to the optimum point and achieve the optimum CRLB asymptotically with very high probability. Note that when it comes to dynamic beam tracking scenarios, the algorithm should track the time-varying AoA, where the step-size should be large enough to ensure the tracking rate. 

%\begin{example}[CRLB of Achievable Rate]
%\textit{
%Suppose the channel response $\textbf{{h}}$ is a fixed parameter. Considering the achievable rate $R =  \log_2\left( 1 + |\textbf{{a}}(x)^\textup{H}\textbf{{h}} |^2  / (M\sigma^2) \right)$ yields
%\begin{equation}
%\begin{aligned}
%	& \lim_{n\rightarrow\infty}~n~\mathbb{E}\left[\left.\left( \hat{R}_n - R \right)^2 \right| \hat{x}_n \rightarrow x \right] \\
%	= & I_{\max}^{-1} \left|\frac{\partial \left( \log_2 \left\{ 1 + \frac{\rho}{M}\cdot\frac{1 - \cos\left[ \frac{2M\pi d}{\lambda} (v - x) \right]}{1 - \cos\left[ \frac{2\pi d}{\lambda} (v - x) \right]} \right\} \right)}{\partial v}\right|^2_{v = x} \\
%	= & 0,
%\end{aligned}
%\end{equation}
%which means its convergence rate is faster than that of $\{\hat{x}_n\}$.
%}
%\end{example}

\ifreport

%\subsection{Performance Metric Under Dynamic Channel}\label{dynamic_channel}
%
%The channel AoA will change under dynamic channel conditions, which means there is no convergence in the same sense as the static channel. Therefore, it is needed to come up with some other metrics or ways to evaluate its performance. To take the realistic scenarios into consideration, the dynamic tracking curve is a good way to show how well the estimated parameter can track the varying parameter. What's more, the MSE or the averaged achievable rate across certain time range can also show the performance of the algorithm. Detailed discussions will be made in Section \ref{dynamic_channel_simulation}.

\begin{figure}[!t]
\centering
\includegraphics[width=6cm]{20170502_x_vs_theta_diminshing_step_size.eps}
\caption{Estimation error comparison between the algorithms tracking the AoA $\theta$ and its sine $x$.}
\label{fig_x_vs_theta}
\vspace{-5mm}
\end{figure}

\subsection{To Track the AoA $\theta$ or its Sine $x$?}\label{further_discussion}

We can design the analog beam tracking algorithm by tracking either the AoA $\theta$ or its sine $x$. The  algorithm that tracks  the sine $x$ is provided in  Algorithm 1.  The  algorithm that directly tracks the AoA $\theta$, called {Algorithm 2}, is described as follows:

% which consists of two stages: 

%Recursive Analog Beam Tracking Algorithm:}
\begin{itemize}
\item[1)] \emph{Coarse Beam Sweeping:} %Use the exhaustive beam sweeping algorithm \cite{Hur2013Millimeter}.
Transmit $M$ pilots successively in the first $n_0 \ge 1$ time-slots. The analog beamforming vector $\textbf{w}_{m}$ for receiving the $m$-th pilot is given by \eqref{eq_codebook}.
%\begin{align}\label{eq_codebook}
%\textbf{w}_{m} = \frac{1}{\sqrt{M}}\textbf{a}\left(\frac{2m}{M} - \frac{M+1}{M}\right), m = 1, \ldots, M.
%\end{align}
%which is the same with the DFT codebook in \cite{Heath2016overview}.
The initial estimate  $\hat{\theta}_{n_0}$ of the beam direction is
\begin{equation}
\begin{aligned}
	\hat{\theta}_{n_0} =\arcsin\left\{\underset{\hat{x} \in \mathcal{X}}{\arg\max}~\left|\textbf{a}(\hat{x})^\text{H}\cdot \sum_{m=1}^M y_m \textbf{w}_{m} \right|\right\},
\end{aligned}
\end{equation}
where $\mathcal{X} = \left\{\frac{1 - M_0}{M_0}, \frac{3 - M_0}{M_0}, \ldots, \frac{M_0-1}{M_0}\right\}$ and $M_0$ determines the recovery resolution. 

%\begin{equation}\label{eq_initial}
%	\hat{x}_{n_0} = \frac{2}{M} \underset{n \in \left\{ 1,\ldots,M \right\}}{\arg\max}~|y_n| - \frac{M+1}{M}.
%\end{equation}

%and choose the analog beamforming vector $\textbf{w}_{M+1}$ as
%\begin{equation}
%\textbf{w}_{M+1} = \frac{1}{\sqrt{M}}\textbf{a}\left(\hat{x}_{n_0}\right).
%% \frac{1}{\sqrt{M}}\left[ 1, e^{j \frac{2\pi d}{\lambda} \hat{x}_{n_0}}, \cdots, e^{j \frac{2\pi d}{\lambda}(M-1)\hat{x}_{n_0}} \right]^\text{H}.
%\end{equation}

%At time-slot $n = 1, 2, \ldots, M$, sweep the channel using beamforming vectors $\textbf{w}_{n} = \frac{1}{\sqrt{M}} \textbf{a}(\frac{2n - M}{M})$ and obtain the observations $\textbf{\emph{y}} = [y_1, y_2, \ldots, y_M]^\text{T}$. \\
%Assume $\mathbf{A} =\left[\textbf{a}(\frac{2 - M}{M}), \textbf{a}(\frac{4 - M}{M}), \ldots, \textbf{a}(1)\right]$. Then derive the initial value for the tracking stage as
%\begin{equation}
%	\hat{x}_{n_0} = \underset{\hat{x} \in \mathcal{X}}{\arg\max}~\textbf{a}(\hat{x})^\text{H} \mathbf{A}^\text{H}\textbf{\emph{y}}，
%\end{equation}
%where $\mathcal{X} = \left\{\frac{2 - M_0}{M_0}， \frac{4 - M_0}{M_0}, \ldots, 1\right\}$ and $M_0$ determines the estimation resolution. Let $\textbf{w}_{M+1} = \frac{1}{\sqrt{M}} \textbf{a}(\hat{x}_{n_0})$.

\item[2)] \emph{Recursive Beam Tracking:} In each time-slot $n = n_0 + 1, n_0 + 2, \ldots$, the analog beamforming vector $\textbf{w}_{n}$ is 
\begin{equation}\label{eq_est_theta_ctrl}
	\begin{aligned}
	\textbf{w}_{n} = \frac{1}{\sqrt{M}} \textbf{a}(\sin(\hat{\theta}_{n-1})).
	%\left[ 1, e^{j \frac{2\pi d}{\lambda} \hat{x}_{n}}, \cdots, e^{j \frac{2\pi d}{\lambda}(M-1)\hat{x}_{n}} \right]^\text{H}.
	\end{aligned}
\end{equation}
The estimate $\hat{\theta}_{n}$ is updated by 
\begin{equation}\label{eq_est_theta}
\begin{aligned}
\hat{\theta}_n = \left[ \hat{\theta}_{n-1} - \frac{a_n}{\cos(\hat{\theta}_{n-1})} \operatorname{Im}\left\{ y_{n} \right\} \right]_{-\frac{\pi}{2}}^{\frac{\pi}{2}}.
\end{aligned}
\end{equation}
\end{itemize}

The convergence rate of this tracking algorithm is characterized by Corollary \ref{co_1} with $u(x) = \arcsin x$. In particular, Algorithm   1 and Algorithm 2 share the same asymptotic convergence rate when $\hat{\theta}_{n}$ is very close to $\theta$. On the other hand, 
if $\hat \theta_{n-1}$ is close to $-\frac{\pi}{2}$ or $\frac{\pi}{2}$, $\cos{(\hat \theta_{n-1})}$ in \eqref{eq_est_theta} is close to zero. 
As a result, Algorithm 2  is not stable and may even oscillate when $\theta_n$ is close to $-\frac{\pi}{2}$ or $\frac{\pi}{2}$. However, this oscillation issue does not exist in Algorithm 1. Figure \ref{fig_x_vs_theta} depicts the tracking errors in angular degree in both algorithms, where the system parameters are configured as: $\beta = (1+j)/\sqrt{2}, {\rho} = |\beta|^2/\sigma^2 =10\text{dB}, M = 8, d = 0.5\lambda$, $\theta = 88^\circ$, $x = \sin(\theta) \approx 0.9994$, $a_n = \frac{\lambda}{10\sqrt{M}(M-1)\pi d}$. It can be seen that both algorithms have similar tracking performance at the beginning. As the estimate gets closer to the real value, Algorithm 2 that tracks the AoA $\theta$ starts to oscillate, while  Algorithm 1 is stable.

In addition, \eqref{eq_est_ctrl} and \eqref{eq_est} in Algorithm 1 are less complicated than \eqref{eq_est_theta_ctrl} and \eqref{eq_est_theta} in Algorithm 2 (although both algorithms are of low complexity). Because of these reasons, we choose to track the sine $x$ of the AoA in this paper, instead of  tracking the AoA $\theta$.

%which ensures that  Notice that when  As a result, $\hat{\theta}_n$ can be far from $\hat{\theta}_{n-1}$, which causes an oscillation issue if  On the other hand, because 
%% cause huge oscillations to the estimated value of $\hat{\theta}_n$ and deteriorate the tracking performance. 
%Note that this problem does not exist when tracking the sine $x$ with Algorithm 1.%, which provides better robustness during the tracking process. 

%when tracking the AoA $\theta$. Meanwhile, these oscillations do not exist when tracking the sine $x$ with Algorithm 1. %However, these oscillations will have bad impacts on the tracking performance, especially when the updated estimate escapes from the mainlobe and converges to the stable points out of the mainlobe, which results in an extremely low received SNR in \eqref{eq_observation}. 
%Therefore, tracking the sine $x$, as in Algorithm 1, has better robustness than tracking the AoA $\theta$. 

\fi
\section{Simulation Results}\label{sec_simulation}
\ifreport

\begin{figure}[!t]
\centering
\includegraphics[width=6.5cm]{20170509_static_NMSEvsOverhead_SNR=10dB_report.eps}
\vspace{-2mm}
\caption{$n\times \text{MSE}_{\textbf{h},n}$ vs. time-slot number $n$ for static beam tracking.}
\vspace{-2mm}
\label{fig_static_mse}
\end{figure}

\else
\begin{figure}[!t]
\centering
\vspace{-3mm}
\includegraphics[width=6.3cm]{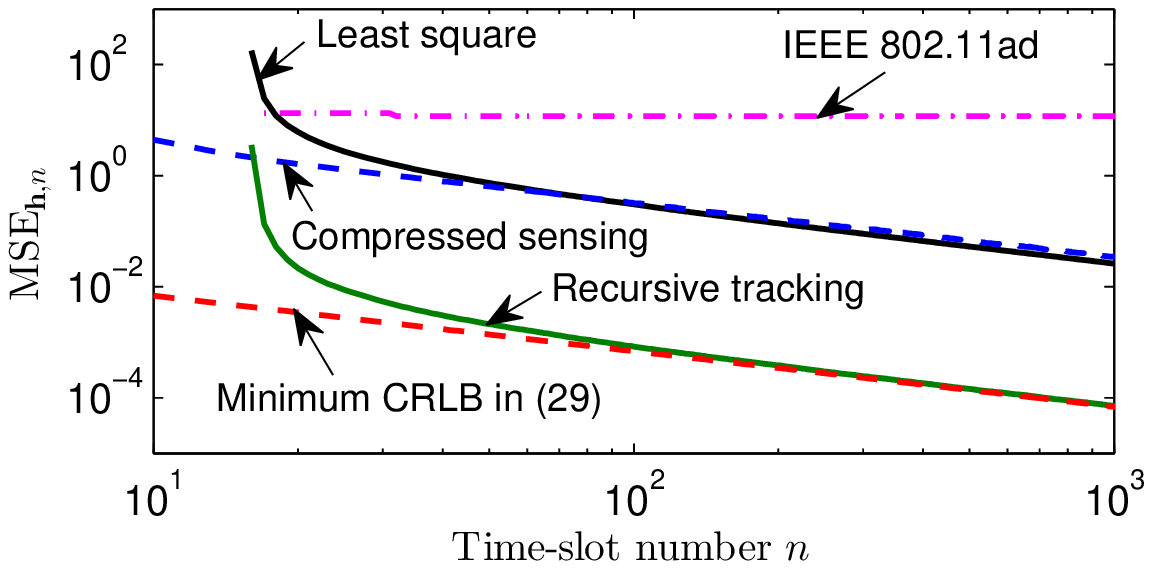}
%{20170509_static_NMSEvsOverhead_SNR=10dB_report.eps}
\vspace{-2mm}
\caption{$\text{MSE}_{\textbf{h},n}$ vs. time-slot number $n$ in static beam tracking scenarios.}
\vspace{-5mm}
\label{fig_static_mse}
\end{figure}
\fi

We compare Algorithm 1 with three reference algorithms:
\begin{itemize}
\item[1)] \emph{IEEE 802.11ad \cite{IEEE80211ad}:} This algorithm contains two stages: beam sweeping and beam tracking. In stage one, sweep the beamforming directions in the DFT codebook (\ref{eq_codebook}) and choose the direction with the strongest received signal as the best beam direction. In stage two, probe the best beam direction and its two adjacent beam directions, then choose the strongest direction as the new best beam direction. The second stage is performed periodically.
\item[2)] \emph{Least square \cite{KaramiLS2007}:} Sweep all the beamforming directions in the DFT codebook  (\ref{eq_codebook}) and use the least square algorithm to estimate the channel response $\textbf{{h}}(x_n)$. Then obtain the analog beamforming vector $\textbf{w}_{n}$ for data transmission by
\vspace{-1mm}
\begin{equation}
	w_{mn} = \angle \hat{h}_m(x_n), m = 1, 2, \cdots, M,
	\vspace{-1mm}
\end{equation}
where $\hat{h}_m(x_n)$ is the $m$-th element of the estimated channel response $\hat{\textbf{{h}}}(x_n)$.

% $w_{mn}$ as the phase of the corresponding element in the estimated channel vector.
\item[3)] \emph{Compressed sensing \cite{Gao2015multi, Alkhateeb2015Compressed, Rial2016Hybrid}:}
Randomly choose the phase shifts $w_{mn}$ from $\{ \pm 1, \pm j \}$ to receive pilot signals. Then use the sparse recovery algorithm to estimate the sine $x_n$ of AoA, where a DFT dictionary with a size of 1024 will be used. 

%Use the DFT codebook in (\ref{eq_codebook}) for analog beamforming. Initialize by exhaustive  and track by probing adjacent beams of the best beam.
\end{itemize}
% In all four algorithms, one narrow-beam  pilot is allocated in each time-slot. Hence, these algorithms have the same amount of pilot overhead. 
Two performance metrics are considered: (i) the MSE of the channel response $\textbf{{h}}(x_n)$, defined by
\vspace{-1mm}
\begin{equation}
\begin{aligned}
\text{MSE}_{\textbf{h},n} \overset{\Delta}{=} \mathbb{E}\left[\left\| \hat{\textbf{{h}}}(x_n) - \textbf{{h}}(x)\right\|^2_2 \right] 
\end{aligned}
\vspace{-1mm}
\end{equation}
for the least square algorithm and
\vspace{-1mm}
\begin{equation}
\begin{aligned}
\text{MSE}_{\textbf{h},n} \overset{\Delta}{=} & ~\mathbb{E}\left[\left\| \textbf{{h}}(\hat{x}_n) - \textbf{{h}}(x)\right\|^2_2  \right]
%&~\text{or}~\mathbb{E}\left[\left\| \hat{\textbf{{h}}}(x_n) - \textbf{{h}}(x)\right\|^2_2 \right], 
\end{aligned}
\vspace{-1mm}
\end{equation}
for other algorithms, and (ii)
the achievable rate $R_n$, i.e.,
\vspace{-1mm}
\begin{equation}
	R_n \overset{\Delta}{=} \log_2 \left( 1 + \rho\left|\textbf{{w}}^\text{H}_n\textbf{{a}}(x_n)\right|^2 \right).
	\vspace{-1mm}
\end{equation}
The system parameters are configured as follows: $\beta\!=\!(1+j)/\sqrt{2}, {\rho}\!=\!|\beta|^2/\sigma^2\!=\!10\text{dB}, M\!=\!16, M_0\!=\!2M, d\!=\!0.5\lambda$. In the following subsections, we will investigate the static and dynamic beam tracking scenarios separately.

\subsection{Static Beam Tracking}\label{static_channel_simulation}

In static beam tracking scenarios, we assume that one pilot is allocated in each time-slot. Hence, these algorithms have the same pilot overhead. The received pilot signals of all time-slots $1,\ldots, n$ are used for estimating $x_n$ and $\textbf{{h}}(x_n)$ in the compressed sensing and least square algorithms. The step-size $a_n$ is given by (\ref{eq_stepsize}) with $\alpha = \alpha^*$ and $N_0=0$. The simulation results are averaged over 10000 random system realizations, where the beam direction $x$ is randomly generated by a uniform distribution on $[-1, 1]$ in each realization.

Figure \ref{fig_static_mse} plots the convergence performance of $\text{MSE}_{\textbf{h},n}$ over time. The MSE of Algorithm 1 converges quickly to the minimum CRLB given in \eqref{eq_CRLB_CR}, which agrees with Corollary \ref{co_1}, and is much smaller than those of IEEE 802.11ad, least square and compressed sensing algorithms.

\ifreport

\begin{figure}[!t]
\centering
\includegraphics[width=8.5cm]{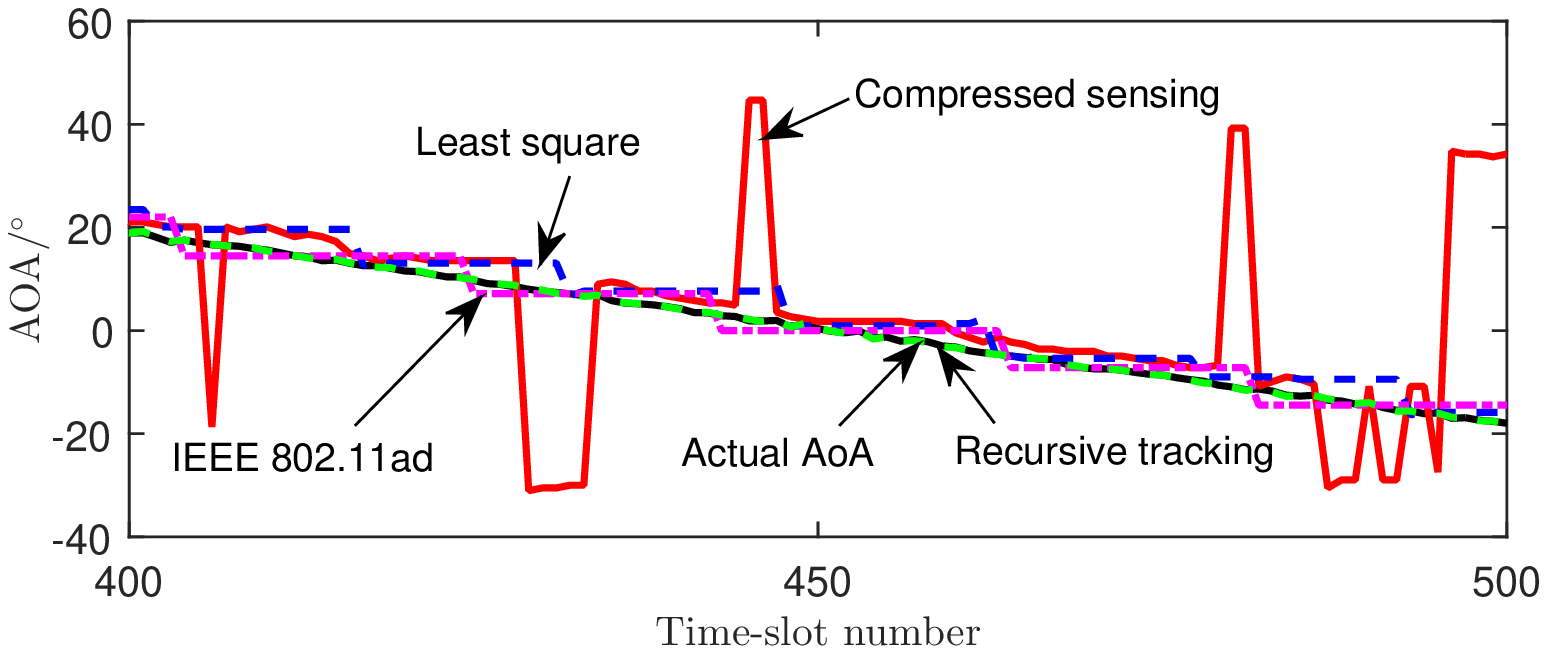}
\caption{AoA tracking in dynamic beam tracking.}
\label{fig_dynamic_tracking}
\end{figure}

\begin{figure}[!t]
\centering
\includegraphics[width=8.5cm]{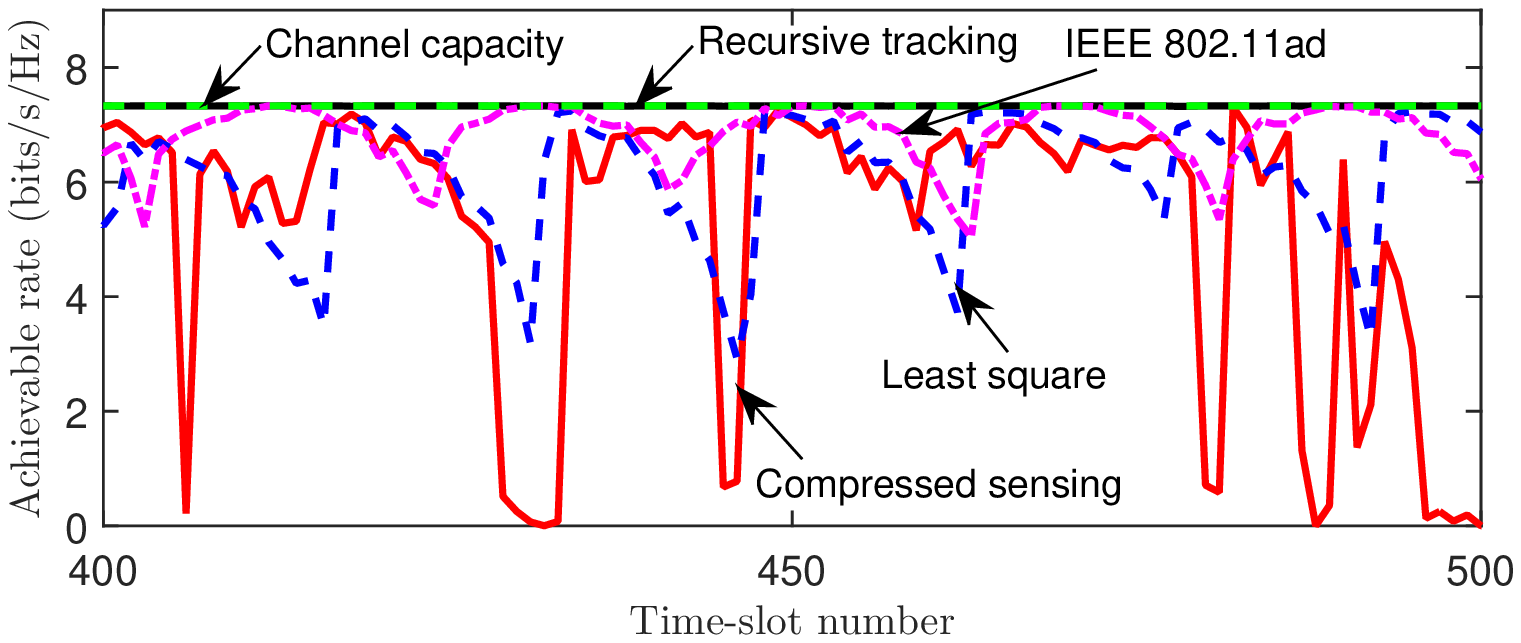}
\caption{Achievable rate in dynamic beam tracking.}
\label{fig_dynamic_tracking_rate}
\end{figure}

\else
\begin{figure}[!t]
\centering
\vspace{-4mm}
\includegraphics[width=8.6cm]{20170511_dynamic_tracking_SNR=10dB_report.eps}
\vspace{-2mm}\caption{AoA tracking in dynamic beam tracking scenarios.}
\vspace{-2mm}
\label{fig_dynamic_tracking}
\end{figure}

\begin{figure}[!t]
\centering
\includegraphics[width=8.6cm]{20170511_dynamic_tracking_rate_SNR=10dB_report.eps}
\vspace{-2mm}
\caption{Achievable rate in dynamic beam tracking scenarios.}
\vspace{-5mm}
\label{fig_dynamic_tracking_rate}
\end{figure}
\fi

\subsection{Dynamic Beam Tracking}\label{dynamic_channel_simulation}\label{sec_dynamic_beam_tracking}

In dynamic beam tracking scenarios, where beam direction changes over time. In the beginning, we assume that continuous pilot training is performed and an initial estimate is obtained for all the algorithms. After that, one pilot is allocated in each time-slot to ensure that these algorithms have the same amount of pilot overhead.

The last $M/2$ pilot signals are used in the compressed sensing algorithm and the last $M$ pilot signals are used in the least square algorithm. For the IEEE 802.11ad algorithm, the probing period of its beam tracking stage is 3 time-slots. These parameters are chosen to improve the performance of these algorithms.
To keep track of the changing beam direction, the step-size $a_n$ of Algorithm 1 is fixed as
\vspace{-1mm}
\begin{equation}
a_n = \alpha^* =\frac{\lambda}{\sqrt{M}(M-1)\pi d},~\text{for all}~n \ge 1,
\vspace{-1mm}
\end{equation}
which is determined by the configuration of the antenna array and is independent of the SNR $\rho$.

%For dynamic beam tracking, we assume that continuous pilot training is performed in the first time-slot and an initial estimate is obtained for all the algorithms. After that, one pilot is allocated in each time-slot to ensure that these algorithms have the same amount of pilot overhead. 
%
%%The received pilot signals of the last $M/2$ time-slots are used in the  compressed sensing algorithm and those of the last $M$ time-slots are used in the least square algorithm. These parameters are chosen to improve the performance of these algorithms.
%The last $M/2$ pilot signals are used in the compressed sensing algorithm and the last $M$ pilot signals are used in the least square algorithm. For the IEEE 802.11ad algorithm, the probing period of its beam tracking stage is 3 time-slots. These parameters are chosen to improve the performance of these algorithms. As suggested in \cite{kushner2003stochastic}, the step-size $a_n$ of Algorithm 1 is fixed as 
%\begin{equation}
%a_n = \alpha^* =\frac{\lambda}{\sqrt{M}(M-1)\pi d},~\text{for all}~n \ge 1.
%\end{equation}
%This step-size is determined by the configuration of the antenna array system, which is independent of the SNR $\rho$.

Figures \ref{fig_dynamic_tracking} and \ref{fig_dynamic_tracking_rate} depict the AoA tracking and achievable rate performance in dynamic scenarios, where the AoA $\theta_n$ varies according to $\theta_n\!=\!({\pi}/{3})\sin\left({2\pi n}/{1000}\right)\!+\!0.005\vartheta_n$ with $\vartheta_n\!\sim\!\mathcal{N}(0, 1)$. Algorithm 1 always tracks the real AoA very well, and achieves the channel capacity $7.33$bits/s/Hz in all the time-slots. The performance of Algorithm 1 is much better than the other three algorithms, and the algorithm used by IEEE 802.11ad is better than the other two.

\ifreport
Figures \ref{fig_dynamic_MSEvsV} and \ref{fig_dynamic_RatevsV} illustrate the average AoA tracking and achievable rate performance under a fixed angular velocity model $\theta_n=\theta_{n-1}+s_{n-1}\!\cdot\!\omega$ where $n=1,\ldots,10000$, $\theta_0=0$, $s_n\in\{-1,\!1\}$ denotes the rotation direction, and $\omega$ is a fixed angular velocity. The rotation direction $s_n$ is chosen such that $\theta_n$ varies within $[-\frac{\pi}{3},\!\frac{\pi}{3}]$. The antenna number is 16. One can observe that Algorithm 1 can support higher angular velocities and data rates than the other algorithms when all 16 antennas are used. In addition, by using a subset of antennas, e.g., $M=4$ or 8, for beam tracking and all $16$ antennas for data transmissions, the beam tracking regime of Algorithm 1 can be further enlarged.

\else
Figures \ref{fig_dynamic_MSEvsV} and \ref{fig_dynamic_RatevsV} illustrate the average AoA tracking and achievable rate performance under a fixed angular velocity model $\theta_n=\theta_{n-1}+s_{n-1}\!\cdot\!\omega$ where $n=1,\ldots,10000$, $\theta_0=0$, $s_n\in\{-1,\!1\}$ denotes the rotation direction, and $\omega$ is a fixed angular velocity. The rotation direction $s_n$ is chosen such that $\theta_n$ varies within $[-{\pi}/{3},\!{\pi}/{3}]$. One can observe that Algorithm 1 can support higher angular velocities and data rates than the other algorithms when all 16 antennas are used. In addition, by using a subset of antennas, e.g., $M=4$ or 8, for beam tracking and all $16$ antennas for data transmission, the beam tracking regime of Algorithm 1 can be further enlarged. In \cite{Li2017analog}, it is shown that it is not always good to use fewer antennas for beam tracking. More specifically, when the SNR is $\rho=0$ dB, it is better to choose $M=8$ than $M=4$ because high antenna gain is needed at low SNR. 
\fi

\begin{figure}[!t]
\centering
%\vspace{-6mm}
\includegraphics[width=6cm]{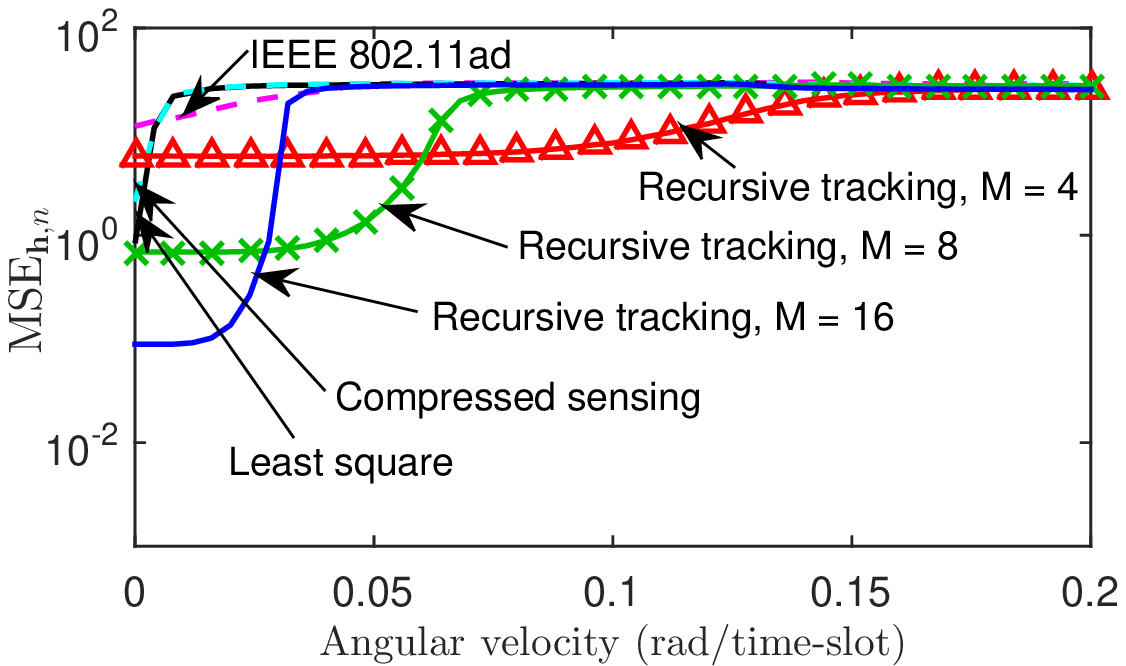}
%\vspace{-3mm}
\vspace{-2mm}
\caption{$\text{MSE}_{\textbf{h},n}$ vs. angular velocity in dynamic beam tracking scenarios.}
\vspace{-2mm}
\label{fig_dynamic_MSEvsV}
%\vspace{-6mm}
\end{figure}

\begin{figure}[!t]
\centering
%\vspace{-6mm}
\includegraphics[width=6cm]{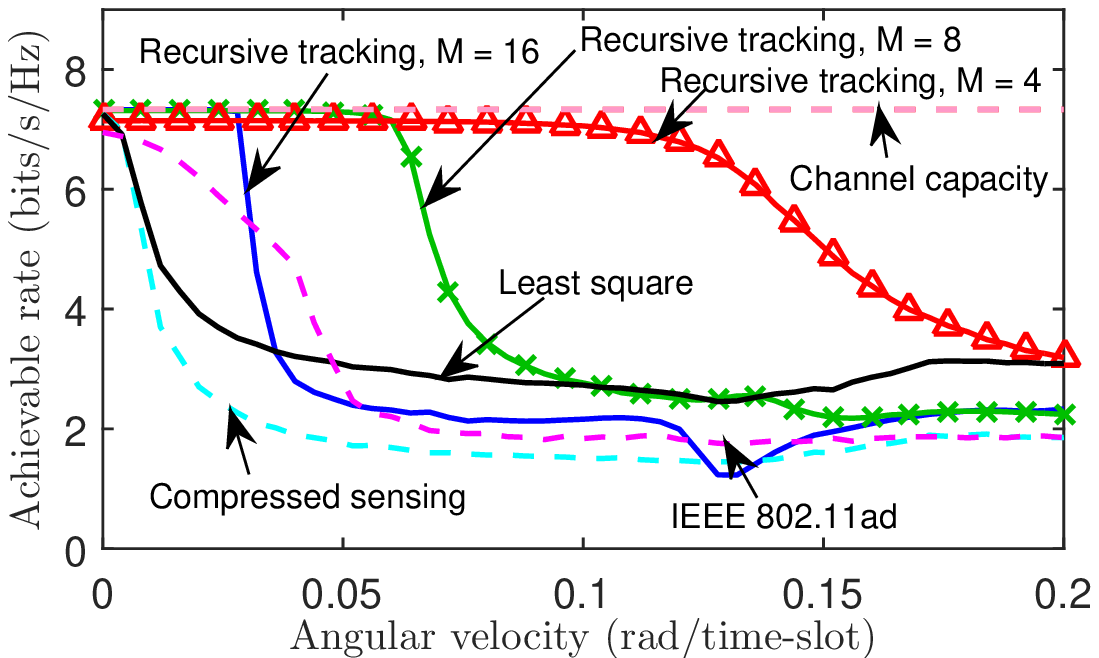}
%\vspace{-3mm}
\vspace{-2mm}
\caption{Data rate vs. angular velocity in dynamic beam tracking scenarios.}
\vspace{-5mm}
\label{fig_dynamic_RatevsV}
%\vspace{-5mm}
\end{figure}

%In different systems, the pilot overhead can be adapted by changing the duration of each time-slot. 
%In this sense, the proposed algorithm can track accurately over a wide range of moving speed with low pilot overhead. %Then, a tradeoff relationship exists among tracking performance (MSE or data rate), moving speed and pilot overhead, where the tracking performance increases with the increase of pilot overhead and the reduction of moving speed. 

According to Fig. \ref{fig_dynamic_RatevsV}, Algorithm 1 can achieve $95\%$ of the channel capacity when the angular velocity of the beam direction is 0.064rad/time-slot, the SNR is $\rho\!=\!10$dB, and $M\!=\!8$.  If each time-slot (TTI) lasts for 0.2ms (e.g., in 5G systems \cite{Pedersen20165g, Zong20165g}), Algorithm 1 can support an angular velocity of $0.064\!\times\!\frac{1000}{0.2}\!=\!320 \text{rad/s}\!\approx\!51 \text{circles/s}$. Similar calculation is made in \cite{Li2017analog} when $\rho\!=\!0$dB. Then, consider a TDMA pilot pattern where 1000 narrow beams are tracked by the antenna array periodically in a round-robin fashion such that 1 pilot is received in each time-slot. Algorithm 1 can support \textbf{18.33$^\circ$/s} (or \textbf{13.18$^\circ$/s}) per beam for tracking all these 1000 beams if $\rho\!=\!10$dB (or 0dB), which is \textbf{72mph} (or \textbf{52mph}) if the transmitters/reflectors steering these beams are at a distance of 100 meters. 

\ifreport
At last, we consider the condition that SNR is $\rho = 0\text{dB}$ and other parameters are the same as Figs. \ref{fig_dynamic_MSEvsV} and \ref{fig_dynamic_RatevsV}. As depicted in Figs. \ref{fig_dynamic_MSEvsV_0dB} and \ref{fig_dynamic_RatevsV_0dB}, it can be seen that Algorithm 1 can provide higher performance gain than the condition that SNR is $\rho = 10\text{dB}$, when all 16 antennas are used. Moreover, by using $M = 8$ antennas for tracking and all 16 antennas for data transmissions, the beam tracking regime of Algorithm 1 can still be enlarged. But when $M = 4$ antennas are used for tracking, the performance deterioration is quite significant due to the low antenna gain. Therefore, when SNR is low, more antennas are needed to ensure the good tracking performance.
\fi

\ifreport

\else

\fi

\ifreport
\section{Conclusions}\label{conclusion}
\else
\vspace{-0.5mm}
\section{Conclusions}\label{conclusion}
\vspace{-0.5mm}
\fi
We have developed an analog beam tracking algorithm, and established its convergence and asymptomatic optimality. Our theoretical and simulation results show that the proposed algorithm can achieve much faster tracking speed, lower beam tracking error, and higher data rate than several state-of-the-art algorithms, with the same pilot overhead. In our future work, we will consider hybrid beamforming systems with multiple RF chains and two-dimensional antenna arrays, based on the methodology developed in the current paper.

\ifreport

\begin{figure}[!t]
\centering
\includegraphics[width=6.5cm]{20170511_dynamic_NMSEvsSpeed_SNR=0dB_report.eps}
\vspace{-1.5mm}
\caption{$\text{MSE}_{\textbf{h},n}$ vs. angular velocity in dynamic beam tracking, ${\rho} = |\beta|^2/\sigma^2 =0\text{dB}$.}
\vspace{-0.5mm}
\label{fig_dynamic_MSEvsV_0dB}
\end{figure}

\begin{figure}[!t]
\centering
\vspace{-1mm}
\includegraphics[width=6.5cm]{20170511_dynamic_RatevsSpeed_SNR=0dB_report.eps}
\vspace{-1mm}
\caption{Achievable rate vs. angular velocity tradeoff in dynamic beam tracking, ${\rho}\!=\!|\beta|^2/\sigma^2\!=\!0\text{dB}$.}
\vspace{-1mm}
\label{fig_dynamic_RatevsV_0dB}
\end{figure}
\fi

\ifreport
\section*{Acknowledgement}
The authors are grateful to Ashutosh Sabharwal for the helpful discussions on this paper.
\fi

\ifreport
\input{appendix}
\fi

\bibliographystyle{IEEEtran}
\bibliography{IEEEabrv,reference}

% Generated by IEEEtran.bst, version: 1.12 (2007/01/11)
\begin{thebibliography}{10}
\providecommand{\url}[1]{#1}
\csname url@samestyle\endcsname
\providecommand{\newblock}{\relax}
\providecommand{\bibinfo}[2]{#2}
\providecommand{\BIBentrySTDinterwordspacing}{\spaceskip=0pt\relax}
\providecommand{\BIBentryALTinterwordstretchfactor}{4}
\providecommand{\BIBentryALTinterwordspacing}{\spaceskip=\fontdimen2\font plus
\BIBentryALTinterwordstretchfactor\fontdimen3\font minus
  \fontdimen4\font\relax}
\providecommand{\BIBforeignlanguage}[2]{{%
\expandafter\ifx\csname l@#1\endcsname\relax
\typeout{** WARNING: IEEEtran.bst: No hyphenation pattern has been}%
\typeout{** loaded for the language `#1'. Using the pattern for}%
\typeout{** the default language instead.}%
\else
\language=\csname l@#1\endcsname
\fi
#2}}
\providecommand{\BIBdecl}{\relax}
\BIBdecl

\bibitem{Larsson2014massive}
E.~G. Larsson, O.~Edfors, F.~Tufvesson, and T.~L. Marzetta, ``Massive {MIMO}
  for next generation wireless systems,'' \emph{IEEE Commun. Mag.}, vol.~52,
  no.~2, Feb. 2014.

\bibitem{Heath2016overview}
R.~W. Heath, N.~Gonz\'alez-Prelcic, S.~Rangan, W.~Roh, and A.~M. Sayeed, ``An
  overview of signal processing techniques for millimeter wave {MIMO}
  systems,'' \emph{IEEE J. Sel. Top. Signal Process.}, Apr. 2016.

\bibitem{Han2015Large}
S.~Han, C.~L. I, Z.~Xu, and C.~Rowell, ``Large-scale antenna systems with
  hybrid analog and digital beamforming for millimeter wave 5{G},'' \emph{IEEE
  Commun. Mag.}, vol.~53, no.~1, Jan. 2015.

\bibitem{Puglielli2016Design}
A.~Puglielli, A.~Townley, G.~LaCaille, V.~Milovanović, P.~Lu, K.~Trotskovsky,
  A.~Whitcombe, N.~Narevsky, G.~Wright, T.~Courtade, E.~Alon, B.~Nikolić, and
  A.~M. Niknejad, ``Design of energy- and cost-efficient massive {MIMO}
  arrays,'' \emph{Proc. IEEE}, vol. 104, no.~3, Mar. 2016.

\bibitem{IEEE80211ad}
{IEEE standard}, ``{IEEE} 802.11ad {WLAN} enhancements for very high throughput
  in the 60 {GH}z band,'' Dec. 2012.

\bibitem{IEEE802153c}
------, ``{IEEE} 802.15.3c {WPAN} millimeter-wave-based alternative physical
  layer extension,'' Oct. 2009.

\bibitem{METIS2015}
{METIS Report}, ``Final performance results and consolidated view on the most
  promising multi-node/multi-antenna transmission technologies,'' Feb. 2015.

\bibitem{ITU2015}
{ITU Report}, ``Technical feasibility of {IMT} in bands above 6{GH}z,'' Jul.
  2015.

\bibitem{Brown2016Promise}
G.~Brown, O.~Koymen, and M.~Branda, ``The promise of 5{G} mm{W}ave - {H}ow do
  we make it mobile?'' \emph{Qualcomm {T}echnologies}, Jun. 2016.

\bibitem{Hur2013Millimeter}
S.~Hur, T.~Kim, D.~J. Love, J.~V. Krogmeier, T.~A. Thomas, and A.~Ghosh,
  ``Millimeter wave beamforming for wireless backhaul and access in small cell
  networks,'' \emph{IEEE Trans. Commun.}, Oct. 2013.

\bibitem{Alkhateeb2014Channel}
A.~Alkhateeb, O.~E. Ayach, G.~Leus, and R.~W. Heath, ``Channel estimation and
  hybrid precoding for millimeter wave cellular systems,'' \emph{IEEE J. Sel.
  Top. Signal Process.}, vol.~8, no.~5, Oct. 2014.

\bibitem{Alkhateeb2015Compressed}
A.~Alkhateeb, G.~Leusz, and R.~W. Heath, ``Compressed sensing based multi-user
  millimeter wave systems: {H}ow many measurements are needed?'' in \emph{IEEE
  ICASSP}, Apr. 2015.

\bibitem{zhang2016mobile}
C.~Zhang, D.~Guo, and P.~Fan, ``Mobile millimeter wave channel acquisition,
  tracking, and abrupt change detection,'' \emph{arXiv preprint
  arXiv:1610.09626}, 2016.

\bibitem{palacios2016tracking}
J.~Palacios, D.~De~Donno, and J.~Widmer, ``Tracking mm-{W}ave channel dynamics:
  {F}ast beam training strategies under mobility,'' \emph{IEEE INFOCOM}, 2017.

\bibitem{garcia2017optimal}
N.~Garcia, H.~Wymeersch, and D.~Slock, ``Optimal robust precoders for tracking
  the {AoD} and {AoA} of a mm-{W}ave path,'' \emph{arXiv preprint
  arXiv:1703.10978}, 2017.

\bibitem{bae2017new}
J.~Bae, S.~H. Lim, J.~H. Yoo, and J.~W. Choi, ``New beam tracking technique for
  millimeter wave-band communications,'' \emph{arXiv preprint
  arXiv:1702.00276}, 2017.

\bibitem{Li2017analog}
J.~Li, Y.~Sun, L.~Xiao, S.~Zhou, and C.~E. Koksal, ``Super fast beam tracking
  in phased antenna arrays,'' \emph{arXiv preprint arXiv:1710.07873}, 2017.

\bibitem{nevel1973stochastic}
M.~B. Nevel'son and R.~Z. Has'minskii, \emph{Stochastic approximation and
  recursive estimation}, 1973.

\bibitem{Poor1994estimation}
H.~V. Poor, \emph{An introduction to signal detection and estimation}.\hskip
  1em plus 0.5em minus 0.4em\relax New York, NY, USA: Springer-Verlag New York,
  Inc., 1994.

\bibitem{borkar2008stochastic}
V.~S. Borkar, \emph{Stochastic approximation: a dynamical systems viewpoint},
  2008.

\bibitem{kushner2003stochastic}
H.~Kushner and G.~G. Yin, \emph{Stochastic approximation and recursive
  algorithms and applications}.\hskip 1em plus 0.5em minus 0.4em\relax Springer
  Science \& Business Media, 2003, vol.~35.

\bibitem{KaramiLS2007}
E.~Karami, ``Tracking performance of least squares {MIMO} channel estimation
  algorithm,'' \emph{IEEE Trans. Commun.}, vol.~55, no.~11, Nov. 2007.

\bibitem{Gao2015multi}
B.~Gao, Z.~Xiao, L.~Su, Z.~Chen, D.~Jin, and L.~Zeng, ``Multi-device multi-path
  beamforming training for 60-{GH}z millimeter-wave communications,'' in
  \emph{2015 IEEE ICC}, Jun. 2015.

\bibitem{Rial2016Hybrid}
R.~M\'endez-Rial, C.~Rusu, N.~Gonz\'alez-Prelcic, A.~Alkhateeb, and R.~W.
  Heath, ``Hybrid {MIMO} architectures for millimeter wave communications:
  Phase shifters or switches?'' \emph{IEEE Access}, vol.~4, Jan. 2016.

\bibitem{Pedersen20165g}
K.~I. Pedersen, G.~Berardinelli, F.~Frederiksen, P.~Mogensen, and A.~Szufarska,
  ``A flexible 5{G} frame structure design for frequency-division duplex
  cases,'' \emph{IEEE Commun. Mag.}, vol.~54, no.~3, Mar. 2016.

\bibitem{Zong20165g}
P.~Zong, ``5{G} and the path to 5{G},'' \emph{Intel {C}orporation}, Oct. 2016.

\end{thebibliography}

% For peer review papers, you can put extra information on the cover
% page as needed:
% \ifCLASSOPTIONpeerreview
% \begin{center} \bfseries EDICS Category: 3-BBND \end{center}
% \fi
%
% For peerreview papers, this IEEEtran command inserts a page break and
% creates the second title. It will be ignored for other modes.
\IEEEpeerreviewmaketitle

\end{document}